\newtheorem{theorem}{Theorem}
\let\Algorithm\algorithm
\renewcommand\algorithm[1][]{\Algorithm[#1]\setstretch{1.2}}
\newcommand{\Taxi}{{\mathchoice{}{}{\scriptscriptstyle}{}T}}
\newcommand{\Rap}{{\mathchoice{}{}{\scriptscriptstyle}{}R}}
\def\p(#1|#2){p\left(#1\,|\,#2\right)} 
\begin{document}

\title{Optimal Power Management for Large-Scale Battery Energy Storage Systems via Bayesian Inference}

\author{Amir Farakhor,~\IEEEmembership{Graduate Student Member,~IEEE}, Iman Askari,~\IEEEmembership{Graduate Student Member,~IEEE}, Di Wu,~\IEEEmembership{Senior Member,~IEEE}, Yebin Wang,~\IEEEmembership{Senior Member,~IEEE}, and Huazhen Fang,~\IEEEmembership{Member,~IEEE}
% <-this % stops a space
\thanks{A. Farakhor, I. Askari, and H. Fang (corresponding author) are with the Department of Mechanical Engineering, University of Kansas, Lawrence, KS, USA (Email: a.farakhor@ku.edu, askari@ku.edu, fang@ku.edu).} 
\thanks{D. Wu is with the Pacific Northwest National Laboratory, Richland, WA, USA (Email: di.wu@pnnl.gov).}
\thanks{Y. Wang is with the Mitsubishi Electric Research Laboratories, Cambridge, MA, USA (Email: yebinwang@merl.com).}
% <-this % stops a space
}
% The paper headers
\markboth{}
{Shell \MakeLowercase{\textit{et al.}}: A Sample Article Using IEEEtran.cls for IEEE Journals}

\maketitle

\begin{abstract}
Large-scale battery energy storage systems (BESS) have found ever-increasing use across industry and society to accelerate clean energy transition and improve energy supply reliability and resilience. However, their optimal power management poses significant challenges: the underlying high-dimensional nonlinear nonconvex optimization lacks computational tractability in real-world implementation, and the uncertainty of the exogenous power demand makes exact optimization difficult. This paper presents a new solution framework to address these bottlenecks. The solution pivots on introducing  {\em power-sharing ratios} to specify each cell's power quota from the output power demand. To find the optimal power-sharing ratios, we formulate a nonlinear model predictive control (NMPC) problem to achieve power-loss-minimizing BESS operation while complying with safety, cell balancing, and power supply-demand constraints. We then propose a parameterized control policy for the power-sharing ratios, which utilizes only three parameters, to reduce the computational demand in solving the NMPC problem. This policy parameterization allows us to translate the NMPC problem into a Bayesian inference problem for the sake of 1) computational tractability, and 2) overcoming the nonconvexity of the optimization problem. We leverage the ensemble Kalman inversion technique to solve the parameter estimation problem. Concurrently, a low-level control loop is developed to seamlessly integrate our proposed approach with the BESS to ensure practical implementation. This low-level controller receives the optimal power-sharing ratios, generates output power references for the cells, and maintains a balance between power supply and demand despite uncertainty in output power. We conduct extensive simulations and experiments on a 20-cell prototype to validate the proposed approach.
\end{abstract}

\begin{IEEEkeywords}
Battery energy storage systems, battery management systems, Bayesian inference, model predictive control.
\end{IEEEkeywords}

\section{Introduction}
\IEEEPARstart{B}{attery} energy storage systems (BESS) have been rising as an enabler for electrified transportation, renewable energy, and grid transformation, by harnessing the strengths of lithium-ion batteries, including fast charging/discharging, high energy efficiency, and long cycle life \cite{TTE-ZW-2021, 2022-ITEC-SM, TTE-LY-2022}. In the upcoming decades, they will become an essential part of our energy infrastructure vital for future sustainability and decarbonization. BESS assembles a large number of lithium-ion battery cells into a hierarchy of modules and packs wired together to provide high power and large capacity. It requires optimal  power management (OPM) to maximize the performance from cell to system level \cite{TTE-WY-2020, 2019-EnergyResearch-SM}. At the core of this problem is allocating the total charging/discharging power among the constituent units with an awareness of safety, cell life and balancing, and energy loss, among others. Due to its importance, the OPM problem has attracted a growing body of work for a diversity of applications from utility-scale energy storage to electric aircraft \cite{2020-arXiv-BA, EnergyConversion-AB-2022}. Recent research advances, however, still leave two bottlenecks open. First, OPM entails high-dimensional nonlinear nonconvex optimization, which is computationally prohibitive for large-size BESS. Second, existing studies generally assume perfect knowledge of future output power demands to facilitate OPM design. This assumption nonetheless is unrealistic, and uncertain output power will considerably complicate OPM design, with little known yet about how to handle it. In this paper, we address these two bottlenecks both at once.  Specifically, we introduce the concept of power-sharing ratios (PSR), which refer to each storage unit's relative share to contribute in charging/discharging; further, we propose a Bayesian inference approach to efficiently identify the optimal PSR based on the output power demands.
\subsection{Literature Review}
We will first review the literature on OPM for BESS based on perfect future power demand prediction. Further, although few existing have taken the future power demand uncertainty into account, we will survey uncertain power management for microgrids, which is a related line of research. Finally, we will review the literature on the intersection of control and Bayesian inference for its connection with the  OPM approach to be proposed.

\subsubsection{OPM with perfect power demand knowledge}
Today's BESS increasingly integrate power electronics circuits with batteries for different functions, including balancing and power flow distribution. These circuits make sophisticated OPM possible to extract better performance from BESS. In this context, the study in \cite{2014-VPPC-BJ} formulates and addresses a multi-objective control problem to achieve optimal power allocation with both charge and temperature balancing. Extending this work, some further studies, e.g., \cite{2016-TSTE-PC}, focus on setting up more solvable convex optimization problems. To enhance BESS performance, a promising avenue is concurrent circuit and OPM design. For example, the BESS design in \cite{2019-TVT-CR} embed power converters and supercapactiors into the balancing circuit and leverage optimal control to attain charge and temperature equalization as well as energy loss minimization. As shown in \cite{2023-TTE-FA}, reconfigurable power electronics architectures based on converters and switches can make BESS safer and more amenable to power regulation, especially when combined with algorithms to perform simultaneous circuit topology and power control. In these studies, the OPM approaches entail significant computational requirements, especially for large BESS configurations. To address this issue, \cite{2021-TCST-CR} presents a multi-layer control framework for BESS, which divides control tasks into different layers with different time scales and model complexities to reduce computational costs. Distributing computation among the constituent units of BESS allows every unit to handle a limited amount of computation, thus capable of accelerating the overall computational speed, as shown by the distributed OPM algorithms in \cite{2023-ACC-FA, 2018-TSTE-OQ}. Another useful approach to mitigate the computational burden is  clustering the batteries within a BESS into different groups according to their characteristics and then performing cluster-based optimization, which is much smaller in scale \cite{2024-TTE-FA}. While these studies represent significant progress, OPM for BESS still faces high computational complexity. This complexity fundamentally arises from the large number of optimization variables, which increases with the size of the BESS, in addition to the nonlinear nonconvex nature of the problem.

Meanwhile, current OPM approaches rely on the impractical assumption of perfect knowledge regarding output power demand, holding them back from reaching performance levels intended by design in real-world scenarios. Important as it is, the consideration of uncertainty in output demand has been largely unexplored. The study in \cite{2017-TCST-AF} proposes to use short-term power load predictions, rather than more varying long-term predictions, for OPM, while sacrificing the power management performance. Other than this work, the consideration of uncertain output power demand has been rare in the literature, highlighting the need for more substantial research.
\subsubsection{Microgrid energy management}
In relation to uncertain OPM for BESS, it is interesting to glance at microgrid power management. This problem centers around optimal coordination and power scheduling for a microgrid comprised of diverse energy resources, including generators, storages, and renewables \cite{2018-TCST-RC}. Uncertainty spreads within the microgrid, especially for renewable generation and loads \cite{2022-IEEEAccess-MG, 2022-EnergySystems-DA}. This has motivated considerable research in load and renewable forecasting \cite{2022-AppliedEnergy-JZ}, but   forecasting   cannot eliminate the uncertainty. Many studies thus pursue the development of robust microgrid power management approaches. Min-max optimization has found its way in \cite{2016-SmartGrid-VF, 2021-Automation-HS} to ensure the microgrid performance in the worst case or in some specific conservative sense. Stochastic optimization and Lyapunov optimization have also shown effective when stochastic uncertainty impacts a microgrid \cite{2013-RenewableEnergy-AB, 2022-IEEEAccess-ME, 2021-AppliedEnergy-Amin}. While these  techniques can suppress the effects of uncertainty, they impose heavy computational costs. Applying them to BESS is almost impractical from a computational perspective, as a BESS has significantly more constituent units than a typical microgrid.
\subsubsection{Optimal control by inference}
The OPM problem has been generally studied within  optimal control frameworks, involving constrained nonconvex optimization~\cite{2012-IFAC-NM, 2016-TSTE-PC, 2019-TVT-CR,2023-TTE-FA,2021-TCST-CR}. The literature has commonly adopted gradient-based optimization solvers to address this problem. These solvers, however, require heavy computation, struggle to overcome nonconvexity, and  fail easily for large-scale BESS. Some gradient-free global optimization solvers may provide an alternative, but they also pose severe computational burden~\cite{2005-NaturalComputation-WX}. The status quo motivates to develop  Bayesian inference-based OPM approaches, which promise more efficient computation as needed for practical implementation. 

While never explored for OPM, the idea of treating optimal control via inference has been around for many years. The earliest and most well-known result is the duality between linear quadratic regulator and Kalman smoothing~\cite{Kalman:JBE:1960}. Since then, the control/estimation duality has been a topic of recurring interest~\cite{2008-CDC-TE,Kim:ACC:2024}. Influenced by duality, control design by inference has gained momentum. Some studies have reformulated stochastic optimal control as variational inference problems~\cite{2012-ML-KH,Williams:TRO:2018}, with the hope of tackling complicated problems unyielding to optimization. Bayesian inference has found increasing use in dealing with NMPC. An early study in this regard is~\cite{Toussaint:ICML:2009}, which uses approximate Bayesian inference and message passing to deal with stochastic optimal control. Similar studies follow in~\cite{2020-PMLR-JW, 2023-ACC-SS}. Particle filtering is used  to solve NMPC in~\cite{2011-ControlLetters-DS}, and the works in~\cite{2021-ACC-AI, Asakri:ACC:2022} further develop constraint-aware particle filtering to address constrained NMPC. If designed to achieve high-quality sampling, particle filtering can provide considerable computational efficiency for complex NMPC problems like those involving neural network dynamic models~\cite{2023-Arxiv-IA}. Related with optimal control, reinforcement learning has seen some solutions rooted in Bayesian inference~\cite{Levine:arXiv:2018}.  From the literature, Bayesian inference has a strong capability of handling optimal control for complex systems, which will be leveraged to enable computationally fast OPM for large-scale BESS.

\subsection{Statement of Contributions}
As reviewed above, the study of OPM for BESS has recently gained significant momentum, but a critical question remains: how can we enable OPM with fast computation while also handling uncertain output power demands? This question has been acquiring more urgency as today's BESS are rapidly increasing in size and facing greater operational uncertainty.

In this paper, we present a new framework solution to tackle the issue. Our study pivots on a key observation: for a storage unit within a BESS, its power allocation is tied to the unit's operating condition, e.g., state-of-charge (SoC), temperature, and internal resistance. This observation inspires us to propose the concept of {\em power sharing ratio} (PSR), which stands for the power quota of the unit relative to the predicted total output power demand. The PSR can be parameterized as a function of the unit's operating condition. Finding the best parameters of the PSR function then is tantamount to solving the OPM problem---this effectively translates the OPM problem into a parameter inference or estimation problem. Using the PSR as the decision variables in OPM will present two major advantages. 
\begin{enumerate}[1)]
	\item The parameterized PSR function includes only several parameters to identify, even for a large-size BESS, contrasting with thousands of decision variables to optimize in traditional OPM methods. This makes it possible to reduce the amounts of computation by considerable margins. 
	\item The PSR measures a storage unit's power-handling capability relative to other units, based on the unit's operating condition. As such, it is a reliable metric for splitting the total power demand under varying demand scenarios and thus capable of managing uncertain power demands.
\end{enumerate}

Following the above, we develop the following specific contributions to build a holistic PSR-based OPM framework.

\begin{enumerate}[1)]
	\item We introduce the definition of PSR and establish a two-level PSR-based OPM architecture. At the top level is an NMPC for PSR-based power allocation. At the lower level is a proportional-integral (PI) controller, which leverages the optimal PSR from the top level to determine and track each unit's output power reference. The architecture will ensure both near-optimal power allocation and the power supply-demand balance under imperfect output power predictions. 
	\item We propose a parameterized form of the PSR, which captures the PSR's dependence on the unit's SoC, temperature and internal resistance. The  NMPC-based OPM problem then drills down to a problem of estimating the best parameters of the PSR function using the control objectives and constraints as the evidence. To address this non-trivial parameter estimation problem, we adopt the ensemble Kalman inversion (EnKI) approach, which combines effective sampling with sequential computation to achieve high computational efficiency.
	\item We develop a 20-cell testbed to assess the performance of the proposed approach. The experiments involve the operation of the BESS under output power uncertainty and unbalanced cell conditions. The obtained results demonstrate that the proposed OPM approach is computationally fast and scalable while robust against uncertainty in power demand.
\end{enumerate}

A preliminary version of this study was presented in \cite{ACC-FA-2024}. The current work significantly expands upon it by 1) providing a more complete problem formulation, 2) introducing the two-level OPM architecture that incorporates a low-level PI controller for the purpose of practical implementation, and 3) adding the experimental validation results based on a lab-scale prototype.
\subsection{Organization}
The rest of the paper is organized as follows. Section II outlines the circuit structure and OPM control architecture of the considered BESS. Section III presents the electro-thermal modeling of the considered BESS and subsequently formulates an NMPC problem for OPM. Section VI expresses the NMPC problem as a Bayesian inference problem. Section V delves into the design of the low-level controller. Sections VI and VII provide the simulation and experimental results, respectively, to validate the proposed approach. Finally, Section VIII concludes the paper with final remarks and insights.
\section{BESS Circuit Structure and Control Architecture}
This section introduces the circuit structure of the considered BESS, and then describes the control problem and architecture for its OPM. 

\begin{figure}[!t]
\centering
\includegraphics[width=\linewidth]{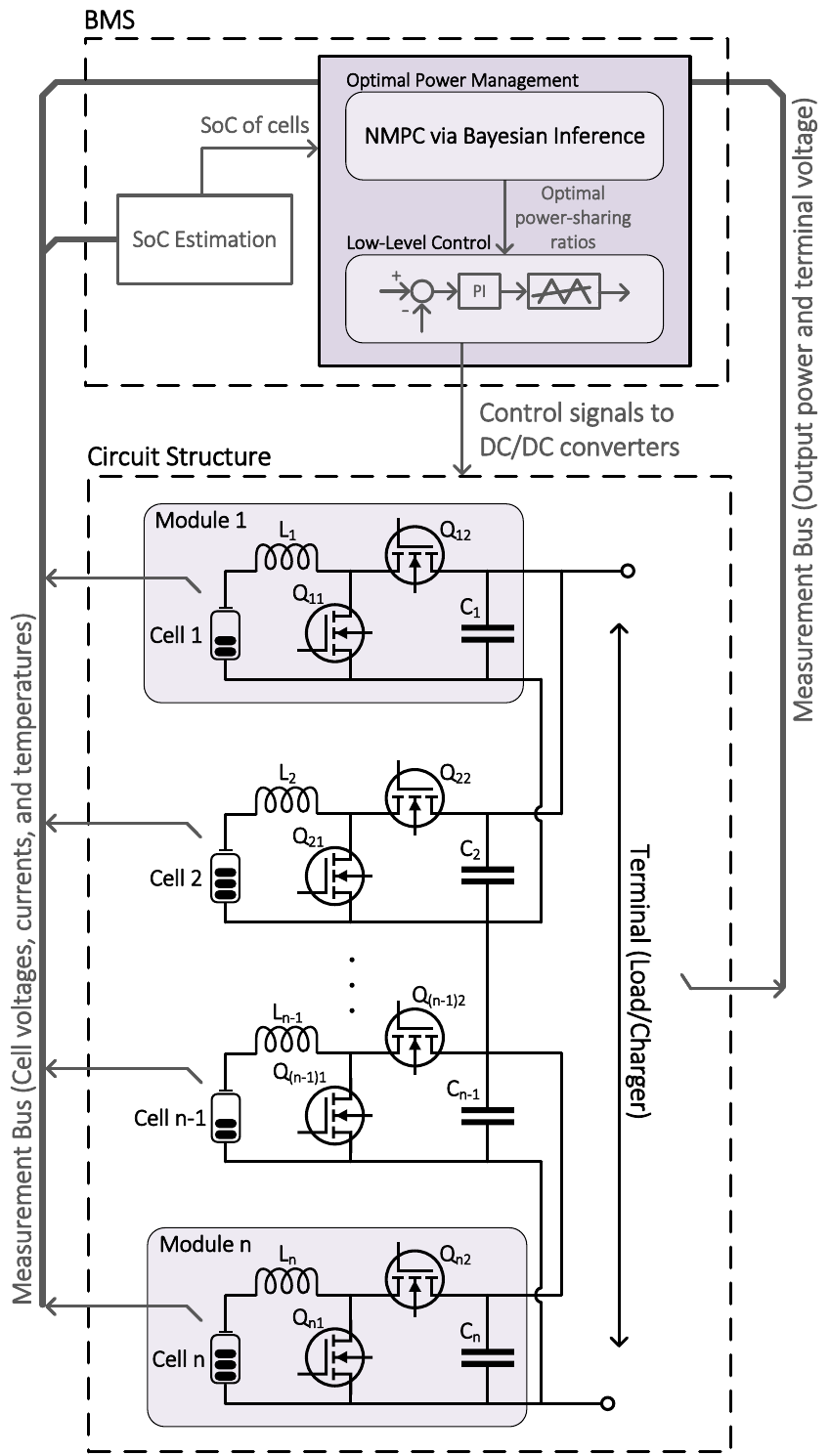}
\caption{The BESS circuit structure and control architecture.}
\label{FIG-1}
\end{figure}

Fig.~\ref{FIG-1} illustrates the circuit structure of the considered BESS and its interaction with the OPM approach. The BESS comprises $n$ modules, each module consisting of a storage unit connected to a DC/DC converter. The storage unit is made a battery cell to simplify the presentation. This is without loss of generality because a battery module or pack can be approximated by an upscaled cell model. The converter in each module is a synchronous DC/DC converter, which involves an inductor, a capacitor, and two power switches. It conducts controlled bidirectional power processing to charge or discharge the cell. The synchronous converters are used as they are sufficient for the power control need and also simple enough, though one may also use other types of DC/DC converters. These modules are connected in series or parallel configurations to fulfill the output voltage and power requirements. This BESS circuit structure is a derivative of what our previous study presents in \cite{2023-TTE-FA}.

As shown above, the circuit structure is characterized by the cell-level charging/discharging power control capability, making it possible to manage the power allocation for each individual cell. Harnessing this capability, one can enable a balanced utilization of the cells in terms of their SoC and temperature, while making them collectively meet a power demand. To achieve this, we will perform OPM design on the basis of the circuit structure from a control perspective.

Associated with the circuit structure, we propose a control architecture, as also illustrated in Fig.~\ref{FIG-1}. The architecture has two levels.

\begin{enumerate}[1)]
	\item The top level deals with OPM, which is the primary focus of our study. It solves an NMPC problem to determine the optimal PSR for each cell. The NMPC problem is formulated to minimize system-wide power losses, account for the operation conditions of individual cells, and adhere to constraints for safety, balancing, and power supply/demand match. 
	\item The lower-level controller computes the reference charging/discharging power for each cell. The reference computation is guided by the optimal PSR while also taking into account the instantaneous output power demand. This will make the cells contribute together to meeting the demand based on their own operating conditions, even when the demand is uncertain and vary from time to time. The lower-level controller further implements a PI controller to achieve reference tracking.
\end{enumerate}
\section{OPM Problem Formulation}
In this section, we elaborate on the OPM problem setup. We begin with showing an electrical and thermal model for the considered BESS. Next, we introduce the PSR and formulate an NMPC problem to perform PSR-based OPM.

\subsection{Electro-thermal Modeling}

\begin{figure}[!t]
    \centering
    \subfloat[\centering ]{{\includegraphics[width=4.9cm]{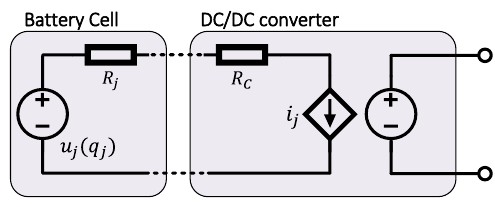} }}
    \subfloat[\centering ]{{\includegraphics[width=3.9cm]{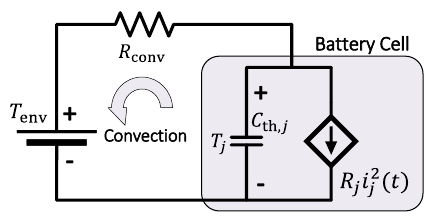} }}
    \caption{ (a) The electrical model of   module $j$; (b) the thermal model of cell $j$.} % The electro-thermal model, adapted from \cite{2024-TTE-FA}.
    \label{FIG-2}
\end{figure}

Fig.~\ref{FIG-2} (a) illustrates the electrical model of module $j$, which consists of cell $j$ and a DC/DC converter. We utilize the Rint model to describe cell $j$'s dynamics \cite{2011-Energies-HH}. This model includes an SoC-dependent open-circuit voltage (OCV) source and a series internal resistor, governed by
\begin{subequations}
	\begin{align}
		\dot{q}_j(t)&=-\frac{1}{Q_j}i_j(t),  \label{SoCDynamics_0}\\
		v_j(t)&=u_j(q_j(t))-R_j(q_j(t))i_j(t),
	\end{align}
\end{subequations}
where $q_j$, $Q_j$, $v_j$, $u_j$, $R_j$, and $i_j$ are the SoC, capacity, terminal voltage, OCV, internal resistor, and charging/discharging current, respectively. Note that $u_j$ and $R_j$ are dependent on $q_j$ for batteries. This dependency allows for a more accurate representation of the cell’s behavior. The cell’s internal charging/discharging power is
\begin{equation}
	P_{b_j}=u_j(q_j(t))i_j(t).
	\label{Internal_Power}
\end{equation}
The DC/DC converter is modeled as an ideal DC transformer connected with a series resistor $R_C$ to account for its power loss. The power loss of module $j$ can now be given by
\begin{equation}
	P_{L_j}=(R_j(q_j(t))+R_C)i_j^2(t).
	\label{PowerLoss_0}
\end{equation}

Proceeding forward, we introduce the PSR as follows:
\begin{equation}
	\mu_j=\frac{P_{b_j}}{P_{\textrm{out}}},
	\label{UtilFactor}
\end{equation}
where $\mu_j$ and $P_{\textrm{out}}$ denote the PSR and output power prediction, respectively. 
%{\color{red}In \eqref{UtilFactor}, we consider the absolute value of $P_{\textrm{out}}$ to accommodate both charging and discharging scenarios [Should we delete this? This statement is a bit unclear, and we only need to let $P_b$ and $P_{out}$ have the same sign in charging or discharging]}. 
The PSR indicates how much cell $j$ should contribute to fulfilling the output power demand. With the PSR definition in \eqref{UtilFactor}, we rewrite \eqref{SoCDynamics_0} and \eqref{PowerLoss_0} as follows:
\begin{subequations}
	\begin{align}
		\dot{q}_j(t)&=-\frac{\mu_j(t)P_{\textrm{out}}(t)}{Q_j\mathcal{Z}_j(q_j(t))}, \label{SoCDynamics}\\
		P_{L_j}&=\frac{(R_j(q_j(t))+R_C)\mu_j^2(t)P_{\textrm{out}}^2(t)}{\mathcal{Z}_j^2(q_j(t))}, \label{PowerLoss}
	\end{align}
	\label{ElectricalModel}
\end{subequations}
\hspace{-5pt}where $\mathcal{Z}_j(\cdot)$ encodes the SoC/OCV relationship, i.e., $u_j(t)=\mathcal{Z}_j(q_j(t))$.

To describe cell $j$'s temperature, we adopt a lumped thermal circuit model, as illustrated in Fig.~\ref{FIG-2} (b). This model focuses on capturing the effects of the heat generation induced by power losses and the convection with the ambient environment. It is governed by
\begin{equation}
	C_{\textrm{th}}\dot{T}_j(t) = R_j(q_j(t))i_j^2(t) - (T_j(t)-T_{\textrm{env}})/R_{\textrm{conv}},
	\label{ThermalModel-1}
\end{equation}
where $T_j$, $T_{\textrm{env}}$, $C_{\textrm{th}}$, and $R_{\textrm{conv}}$ are the cell's temperature, environmental temperature, heat capacity, and convective resistance, respectively. Following \eqref{Internal_Power} and \eqref{UtilFactor}, we can rewrite \eqref{ThermalModel-1} as
\begin{equation}
	C_{\textrm{th}}\dot{T}_j(t) = \frac{R_j(q_j(t))\mu_j^2(t)P_{\textrm{out}}^2(t)}{\mathcal{Z}_j^2(q_j(t))} - (T_j(t)-T_{\textrm{env}})/R_{\textrm{conv}}.
	\label{ThermalModel-2}
\end{equation}

From above, \eqref{ElectricalModel} and \eqref{ThermalModel-2} yield a concise yet representative electro-thermal model. Based on the model, we can now formulate an NMPC problem to achieve OPM. 
\subsection{NMPC-Based OPM Problem}
The OPM design aims to find the best PSR for all the cells throughout the operation of the BESS to minimize the system-level power losses, while satisfying different constraints pertaining to safety, balancing, and power supply/demand match. An NMPC-based problem formulation well suits the aim, exploiting NMPC's ability of running receding-horizon optimization under constraints.

To begin with, the overall power loss is given by
\begin{equation}
	L(t) = \sum_{j=1}^{n} \frac{(R_j(q_j(t))+R_C)\mu_j^2(t)P_{\textrm{out}}^2(t)}{\mathcal{Z}_j^2(q_j(t))}.
	\label{Loss}
\end{equation}
To ensure the safe operation of the BESS, we impose the subsequent constraints on the operating temperature, SoC, and charging/discharging currents of the cells:
\begin{subequations}
	\begin{align}
		T_j^{\min} &\leq T_j \leq T_j^{\max}, \label{Safety-1} \\
		q_j^{\min} &\leq q_j \leq q_j^{\max}, \label{Safety-2} \\
		i_{j}^{\min} &\leq i_j \leq i_{j}^{\max}, \label{Safety-3}
	\end{align}
	\label{SafetyConstraints}
\end{subequations}
\hspace{-5pt}where $T_{j}^{\textrm{min}/\textrm{max}}$, $q_{j}^{\textrm{min}/\textrm{max}}$, and $i_j^{\textrm{min}/\textrm{max}}$ are the lower/upper safety limits. Note that \eqref{Safety-3} can be expressed using the PSR as follows:
\begin{equation}
	\frac{\mathcal{Z}_j(q_j(t))}{|P_{\textrm{out}}(t)|} i_{j}^{\min} \leq \mu_j(t) \leq \frac{\mathcal{Z}_j(q_j(t))}{|P_{\textrm{out}}(t)|} i_{j}^{\max}.
	\label{Safety-4}
\end{equation}
We further introduce two constraints to promote the SoC and temperature balancing:
\begin{subequations}
	\begin{align}
		\left| q_j - q_{\textrm{avg}} \right| &\leq \Delta q, \\
		\left| T_j - T_{\textrm{avg}} \right| &\leq \Delta T,
	\end{align}
	\label{BalancingConstraints}
\end{subequations}
\hspace{-4pt}where $q_{\textrm{avg}}$ and $T_{\textrm{avg}}$ are the average SoC and temperature values across all cells, respectively, and $\Delta q$ and $\Delta T$ indicate the permissible deviations for individual cells from the average. We further enforce the following constraint to make the power supply meet with the demand:
\begin{equation}
	\sum_{j=1}^n \left(P_{b_j}(t) - (R_j(q_j(t))+R_C)i_{j}^2(t)\right) = P_{\textrm{out}}(t).
	\label{PowerBalanceConstraint-1}
\end{equation}
We again emphasize that $P_{\textrm{out}}(t)$ is the output power demand prediction. Using \eqref{PowerLoss}, we can express \eqref{PowerBalanceConstraint-1} in terms of PSR as below:
\begin{equation}
	\sum_{j=1}^n \mu_j(t) = 1 + \frac{L(t)}{|P_{\textrm{out}}(t)|}.
	\label{PowerBalanceConstraint-2}
\end{equation}
In \eqref{PowerBalanceConstraint-2}, we consider the absolute value of $P_{\textrm{out}}$ to accommodate both charging and discharging scenarios. The above equation reveals key properties of the PSR. Without power losses, the sum of PSR equals 1, and for identical cells, each will have a PSR of $1/n$. However,  power losses are practically inevitable, leading the sum of PSR to exceed 1 to account for the losses within the BESS.

Having laid out the power loss expression and the constraints, we are now in a good position to introduce the NMPC-based OPM problem. For concise presentation, we define $ \boldsymbol{\mu}(t) = \begin{bmatrix}\mu_1(t)&\dots&\mu_n(t)\end{bmatrix}^\top$. We then can express \eqref{Loss} as $L(t) = \boldsymbol{\mu}^\top(t) \boldsymbol{B}(t)\boldsymbol{\mu}(t)$, where $\boldsymbol{B}(t) \in \mathbb{R}^{n\times n}$ is a diagonal matrix with the diagonal elements being
\begin{equation}
	B_{jj} = \frac{(R_j(q_j(t)+R_C)P_{\textrm{out}}^2(t)}{\mathcal{Z}_j^2(q_j(t))}, \nonumber
\end{equation}
for $j=1,\dots,n$. The NMPC problem for OPM is formulated as follows:

\begin{equation}
	\begin{split}
		\min_{\mu_j ,j=1,...,n} &\quad \int_{k}^{k+H} L(t)dt, \\
		\textrm{s.t.} &\quad \eqref{SoCDynamics}, \eqref{ThermalModel-2}, \eqref{SafetyConstraints}\mbox{-}\eqref{BalancingConstraints}, \eqref{PowerBalanceConstraint-2},
	\end{split}
	\label{OptimNonConvex}
\end{equation}
where $H$ is the planning horizon length. The objective of \eqref{OptimNonConvex} is to optimize the PSR of the cells, whereby the optimal power is allocated to ensure minimized power losses while satisfying the specified constraints. Here, \eqref{OptimNonConvex} presents a nonlinear nonconvex optimization problem, the solution of which typically relies on numerical algorithms. Involving $3Hn-n$ optimization variables, this problem is computationally amenable only for small-scale BESS. It will require intensive computation while facing slow convergence, when a BESS comprises a large number of cells. We hence must find a way to surmount the computational complexity to facilitate the practical OPM implementation, which is the focus of what follows.
\subsection{Parameterized Control Policy}
In~\eqref{OptimNonConvex}, PSR is assigned to every cell as the control variable, and we propose a parametrized control policy, which determines a cell's PSR based on the BESS's operational condition. Note that, by design, the problem in \eqref{OptimNonConvex} attempts to find optimal PSR to minimize power losses while achieving SoC and temperature balancing. This nature implies that a cell's PSR essentially depends on its SoC, temperature, and internal resistance relative to the other cells. Specifically for discharging, the cell should have a higher PSR if its SoC is higher, its temperature is lower, and its internal resistance is less than other cells. The only difference in charging is that low SoC will imply higher PSR. With this rationale, we can parameterize the control policy as
\begin{equation}
	\mu_j(t) = \begin{bmatrix} \phi_{q, j}(t) & \phi_{\Taxi, j}(t) & \phi_{\Rap, j}(t) \end{bmatrix} \begin{bmatrix} \theta_1 \\ \theta_2 \\ \theta_3 \end{bmatrix},
	\label{Form-1}
\end{equation}
where $\theta_1$, $\theta_2$, and $\theta_3$ are the parameters with
\begin{equation}
	\theta_1 + \theta_2 + \theta_3 = 1. 
\end{equation}
The functions $\phi_{q, j}$, $\phi_{\Taxi, j}$, and $\phi_{\Rap, j}$ for cell $j$ are designed as follows: 
\begin{subequations}
	\begin{align}
		\phi_{q,j}(t) &=\begin{cases}\sum_{i=1}^{n}\left(\frac{q_j(t)}{q_i(t)}\right)^{-\xi_q} & \textrm{if} \; \; P_{\textrm{out}} \geq 0 \\ 										\sum_{i=1}^{n}\left(\frac{q_i(t)}{q_j(t)}\right)^{-\xi_q} & \textrm{if} \; \; P_{\textrm{out}} < 0\end{cases}, \label{q-Function}\\
		\phi_{\Taxi,j}(t) &= \sum_{i=1}^{n}\left(\frac{T_j(t)}{T_i(t)}\right)^{-\xi_\Taxi}, \label{T-Function} \\
		\phi_{\Rap,j}(t) &= \left(\sum_{i=1}^{n}\frac{R_j(q_j(t))+R_C}{R_i(q_i(t))+R_C}\right)^{-1}. \label{R-Function}
	\end{align}
	\label{Functions}
\end{subequations}
\hspace{-5pt}Here, the hyperparameters $\xi_q$ and $\xi_\Taxi$ weight the influence of the SoC and temperature on the PSR, relative to that of the internal resistance. The parameterized control policy in \eqref{Form-1} captures the key fact of the underlying solution to \eqref{OptimNonConvex}---cell $j$'s PSR depends on its SoC, temperature and internal resistance and also its operating condition compared to the other cells. The parameterization  reduces the $n$-dimensional control space based on $\mu_j$ for $j=1,2,\ldots,n$ to a three-dimensional parameter space defined by $\theta_1$, $\theta_2$, and $\theta_3$. This transformation greatly enhances search and computational efficiency. Further, the scalability in OPM design will improve considerably, because the dimensionality of the parameter space is decoupled from  the number of cells within the BESS.

We are now ready to reformulate \eqref{OptimNonConvex} according to the parameterized control policy defined in \eqref{Form-1}. For ease of notation, 
we denote \eqref{Form-1} as $\mu_j(t)=\boldsymbol{\phi}_j^\top(t)\boldsymbol{\theta}$,  where $\boldsymbol{\theta}=\begin{bmatrix}\theta_1&\theta_2&\theta_3\end{bmatrix}^\top$ and $\boldsymbol{\phi}_j(t)=\begin{bmatrix}\phi_{q,j}(t)&\phi_{\Taxi,j}(t)&\phi_{\Rap,j}(t)\end{bmatrix}^\top$. Consequently, we have $\boldsymbol{\mu}(t)=\boldsymbol{\Phi}(t)\boldsymbol{\theta}$ where $\boldsymbol{\Phi}(t) = \begin{bmatrix} \boldsymbol{\phi}_1(t) & \dots & \boldsymbol{\phi}_n(t)\end{bmatrix}^\top$. %Also note that $L(t) = \boldsymbol{\theta}^\top\boldsymbol{\Phi}^\top(t)\boldsymbol{B}(t)\boldsymbol{\Phi}(t)\boldsymbol{\theta}$.
Substituting \eqref{Form-1} into \eqref{OptimNonConvex} and discretizing \eqref{SoCDynamics} and \eqref{ThermalModel-2} using the forward Euler method, we derive the discretized NMPC-based OPM problem:
\begin{equation}
	\begin{aligned}
		&\min_{q_j, T_j, \boldsymbol{\theta}} \quad \sum_{t=k}^{k+H}\boldsymbol{\theta}^\top\boldsymbol{\Phi}^\top[t]\boldsymbol{B}[t]\boldsymbol{\Phi}[t]\boldsymbol{\theta} + \left( \bm \theta - \bar {\bm \theta} \right)^\top {\bm R} \left( \bm \theta - \bar {\bm \theta} \right), %  \left\| \bm \theta - \bar {\bm \theta} \right\|_{\bm R}^2,
\\
		&\textrm{SoC dynamics:} \quad \\
		&\ q_j[t+1] = q_j[t] - \frac{\boldsymbol{\phi}_j^\top[t]\boldsymbol{\theta}P_{\textrm{out}}[t]}{Q_j\mathcal{Z}_j(q_j([t])}\Delta t \\
		&\textrm{Temperature dynamics:} \quad \\
		&\ T_j[t+1]=T_j[t] + \frac{\Delta t}{C_{\textrm{th}}} \bigg[\frac{R_j(q_j(t))(\boldsymbol{\phi}_j^\top[t]\boldsymbol{\theta})^2P_{\textrm{out}}^2[t]}{\mathcal{Z}_j^2(q_j[t])} \\
		&\ \qquad \qquad - (T_j[t]-T_{\textrm{env}})/R_{\textrm{conv}} \bigg],\\
		&\textrm{Safety constraints:} \quad \\
		&\ T_j^{\textrm{min}}\leq T_j\leq T_j^{\textrm{max}},\\
		&\ q_j^{\min} \leq q_j \leq q_j^{\max}, \\
		&\ \frac{\mathcal{Z}_j(q_j[t])}{|P_{\textrm{out}}[t]|} i_{j}^{\min} \leq \boldsymbol{\phi}_j^\top[t]\boldsymbol{\theta} \leq \frac{\mathcal{Z}_j(q_j[t])}{|P_{\textrm{out}}[t]|} i_{j}^{\max},\\
		&\textrm{Balancing constraints:} \quad \\
		&\ \left| q_j - q_{\textrm{avg}} \right| \leq \Delta q,\\
		&\ \left| T_j - T_{\textrm{avg}} \right| \leq \Delta T,\\
		&\textrm{Parameter constraint:} \quad \\
		&\ {\bm 1}^\top\boldsymbol{\theta} = 1,\\
		&\textrm{Power supply-demand balance:} \quad \\
		&\ {\bf 1}^\top {\bm \Phi[t]} \bm \theta = 1 + \frac{\boldsymbol{\theta}^\top\boldsymbol{\Phi}^\top[t]\boldsymbol{B}[t]\boldsymbol{\Phi}[t]\boldsymbol{\theta}}{|P_{\textrm{out}}[t]|}, \\
%&\mbox{\color{red} Is the definition of $\Phi$ right? We should also make $\phi_j$ bold.}\\
%& {\color{red} \sum_{j=1}^n \boldsymbol{\phi}_j^\top[t]\boldsymbol{\theta} = {\bf 1}^\top {\bm \Phi[t]} \bm \theta ?} 
	\end{aligned} 
		\label{NMPC}
\end{equation}
where $t$ is the discrete time index and $\Delta t$ is the sampling time, and ${\bm 1}$ is a column vector of ones. In above, $\boldsymbol{\theta}^\top\boldsymbol{\Phi}^\top[t]\boldsymbol{B}[t]\boldsymbol{\Phi}[t]\boldsymbol{\theta}$ corresponds equivalently to $L[t]$, aligning the formulation in \eqref{NMPC} with the objective of minimizing the power losses. The cost function also includes a regularization term, $\left( \bm \theta - \bar {\bm \theta} \right)^\top {\bm R} \left( \bm \theta - \bar {\bm \theta} \right)$,  to stabilize the optimization process, as suggested in \cite{Arxiv-JA-2022, SIAM-CN-2020}, where $\bar {\bm \theta}$ represents the nominal value of $\boldsymbol{\theta}$ and ${\bm R}$ is a weighting factor. This regularization term will play a role in stabilizing the estimation process later. The other parts of \eqref{NMPC} also find their exact counterparts in \eqref{OptimNonConvex}. This new formulation in~\eqref{NMPC} preserves fidelity to the original OPM problem, while mandating the search for the optimal $\bm \theta \in \mathbb{R}^3$ rather than $\bm \mu\in \mathbb{R}^n$ as in~\eqref{OptimNonConvex} with $3 \ll n$. However, it still involve $2nH+3$ optimization variables due to the involvement of $q_j$ and $T_j$ for $j=1,\ldots,n$. This makes~\eqref{NMPC} still resistant to numerical gradient-based optimization for a large-scale BESS. To address this issue, we take a Bayesian inference approach as will be shown next.

\section{Bayesian Inferential OPM}
In this section,  we   recast  the OPM problem in \eqref{NMPC}  as a Bayesian inference problem and then leverage the computationally fast EnKI approach to perform the inference. 

\subsection{Bayesian Inference for the OPM Problem}

To begin with, we rewrite~\eqref{NMPC} compactly as below while moving the discrete-time index to the subscript for notational simplicity: 
\begin{subequations}\label{compact-OPM-problem}
\begin{align}
    \min_{\bm x_{k:k+H}, \bm \theta} \quad & \sum_{t=k}^{k+H} \bm \theta^\top \bm H(\bm x_t) \bm \theta + \left( \bm \theta - \bar {\bm \theta} \right)^\top {\bm R} \left( \bm \theta - \bar {\bm \theta} \right),  \\ \label{compact-state-dynamics}
    \mathrm{s.t.} \quad & \bm x_{t+1}  = \bm f \left( \bm x_t, \bm \theta \right),\\ \label{compact-inequality}
    & \bm g \left( \bm x_t, \bm \theta \right)   \leq \bm 0,\\  \label{compact-equality}
    & \bm e \left( \bm x_t, \bm \theta \right)  = \bm 0,
\end{align}
\end{subequations} 
where 
\begin{align*}
\boldsymbol{x}_t &= \left[\begin{matrix}q_{1}[t]&...&q_{n}[t] & T_{1}[t]&...&T_{n}[t]\end{matrix}\right]^\top,\\
\bm H(\bm x_t) &= \boldsymbol{\Phi}^\top_t\boldsymbol{B}_t\boldsymbol{\Phi}_t.
\end{align*}
In correspondence to~\eqref{NMPC},~\eqref{compact-state-dynamics}-\eqref{compact-equality} represent the cell dynamics, inequality-based safety and balancing constraints, and equality-based parameter and power supply-demand balance constraints, respectively. Given~\eqref{compact-OPM-problem}, the search for optimal $\bm \theta$ inspires an interesting question: {\em can we infer or estimate $\bm \theta$ if the system is observed to behave optimally as defined in~\eqref{compact-OPM-problem} within the horizon $t\in [k, k+H]$?} Answering this question will lead to an inference-driven solution to~\eqref{compact-OPM-problem}, which is  our focus next. 

Let us introduce an auxiliary virtual dynamic system as follows: 
\begin{equation} 
\left\{
		\begin{aligned}
		\boldsymbol{x}_{t+1} &= \boldsymbol{f}(\boldsymbol{x}_t,\boldsymbol{\theta}), \\
		\bm y_t &= \boldsymbol{h}(\boldsymbol{x}_t,
\boldsymbol{\theta}) + \bm v_t, 
		\end{aligned}
\right.
	\label{VirtualDynamics}
\end{equation}
for $t=k,...,k+H$, where
\begin{align} \label{h_Function}
    \bm h \left( \bm x_t, \bm \theta \right) =  
    \begin{bmatrix} \sqrt{\bm H(\bm x_t)} \bm \theta \cr \bm \psi_{\bm g} \left( \bm g \left( \bm x_t, \bm \theta \right) \right) \cr
    \bm \psi_{\bm e} \left( \bm e \left( \bm x_t, \bm \theta \right) \right)
    \end{bmatrix},
\end{align} 
where $\sqrt{(\cdot)}^\top \sqrt{(\cdot)} = (\cdot)$, $\bm \psi_{\bm g}(\cdot)$ and $\bm \psi_{\bm e}(\cdot)$ are barrier functions that output $\bm 0$ when the constraint is met and $\bm \infty$ otherwise, and $\bm v_t$ is an additive noise. This auxiliary system replicates the dynamics in~\eqref{compact-OPM-problem} and adds an observation equation. The observation variable $\bm y_t$ is designed to measure the level of optimality of the system's behavior and the level of constraint satisfaction, through the construction of $\bm h(\cdot)$. Comparing~\eqref{VirtualDynamics} against~\eqref{compact-OPM-problem}, we can find that $\bm y_t= \bm 0$ if the system behaves optimally. This suggests the potential of inferring  optimal $\bm \theta$ based on the virtual observations $\bm y_t= \bm 0$ for $t=k, \ldots, k+H$. To formalize the notion, we consider   the maximum a posteriori (MAP) estimation:
\begin{align}\label{MAP}
	\boldsymbol{\hat{\theta}}^*= \arg \max_{\boldsymbol{\theta}} \, \, \log \p(\bm x_{k:k+H}, \boldsymbol{\theta} | \boldsymbol{y}_{k:k+H}=\boldsymbol{0}).	
\end{align}
In sequel, we will drop $\bm 0$ and use $\p(\bm x_{k:k+H}, \boldsymbol{\theta} | \boldsymbol{y}_{k:k+H})$   to ease  notation. 

Next, we demonstrate how the MAP problem in \eqref{MAP} aligns with the OPM problem in~\eqref{compact-OPM-problem}.  
To this end, we set up a moving horizon estimation (MHE) problem for \eqref{VirtualDynamics} to estimate $\boldsymbol{x}_{k:k+H}$ and $\boldsymbol{\theta}$:
\begin{subequations}
\label{MHE}
\begin{align}
    \min_{\bm x_{k:k+H}, \bm \theta} \quad & \sum_{t=k}^{k+H}  \bm v_t^\top {\bm Q} \bm v_t + \left( \bm \theta - \bar {\bm \theta} \right)^\top {\bm R} \left( \bm \theta - \bar {\bm \theta} \right),  \\
    \mathrm{s.t.} \quad & \bm x_{t+1}  = \bm f \left( \bm x_t, \bm \theta \right), \label{DynamicMHE} \\
     &\bm y_t = \bm h 
    \left( \bm x_t, \bm \theta \right) +\bm v_t. 
\end{align}
\end{subequations} 
This MHE problem is nearly identical to~\eqref{NMPC}. The only difference lies in the treatment of the constraints. Here, the constraints are enforced by adding to the cost function a penalty term based on the barrier functions $\bm \psi_{\bm g}(\cdot)$ and $\bm \psi_{\bm e}(\cdot)$. The following theorem shows how~\eqref{MAP} relates to~\eqref{MHE}.

\begin{theorem}\label{MAP-MHE-Equivalence}
Assume that $ \boldsymbol{\theta}  \sim \mathcal{N}(\boldsymbol{\bar{\theta}}, \bm R^{-1})$, and that $ \bm v_t \sim\mathcal{N}(\bm 0, \bm Q^{-1})$ is a  white Gaussian noise.
Then, the problems in \eqref{MHE} and \eqref{MAP} share the same optima.
\end{theorem}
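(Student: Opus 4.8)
The plan is to connect the two problems through Bayes' rule, showing that the log-posterior in~\eqref{MAP} equals, up to an additive constant and a factor of $-\tfrac12$, the negative of the MHE cost in~\eqref{MHE} on the common feasible set. First I would factor the posterior as
\[
p\left(\bm x_{k:k+H}, \boldsymbol{\theta} \mid \bm y_{k:k+H}\right) \;\propto\; p\left(\bm y_{k:k+H} \mid \bm x_{k:k+H}, \boldsymbol{\theta}\right)\, p\left(\bm x_{k:k+H} \mid \boldsymbol{\theta}\right)\, p\left(\boldsymbol{\theta}\right),
\]
and exploit the structure of the virtual system~\eqref{VirtualDynamics}. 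Because $\bm v_t$ is white, the likelihood splits as $\prod_{t=k}^{k+H} p\left(\bm y_t \mid \bm x_t, \boldsymbol{\theta}\right)$ with each factor $\mathcal{N}\!\left(\bm y_t;\, \bm h(\bm x_t,\boldsymbol{\theta}),\, \bm Q^{-1}\right)$; and because the recursion $\bm x_{t+1} = \bm f(\bm x_t,\boldsymbol{\theta})$ is deterministic, $p\left(\bm x_{k:k+H}\mid\boldsymbol{\theta}\right)$ collapses to a product of Dirac deltas enforcing~\eqref{DynamicMHE} (times an uninformative prior on $\bm x_k$), i.e. its support is exactly the feasible set of the MHE problem.

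Next I would substitute the Gaussian densities $p(\boldsymbol{\theta}) = \mathcal{N}(\bar{\boldsymbol{\theta}}, \bm R^{-1})$ and $p(\bm v_t) = \mathcal{N}(\bm 0, \bm Q^{-1})$, evaluate at the virtual observations $\bm y_t = \bm 0$, and take logarithms. Since $\bm y_t = \bm 0$ forces $\bm v_t = \bm y_t - \bm h(\bm x_t,\boldsymbol{\theta}) = -\bm h(\bm x_t,\boldsymbol{\theta})$, the data-fit term is $\bm h(\bm x_t,\boldsymbol{\theta})^\top \bm Q\, \bm h(\bm x_t,\boldsymbol{\theta}) = \bm v_t^\top \bm Q\, \bm v_t$, while the prior contributes $(\boldsymbol{\theta}-\bar{\boldsymbol{\theta}})^\top \bm R (\boldsymbol{\theta}-\bar{\boldsymbol{\theta}})$. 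Hence, on the set where the dynamics hold,
\[
\log p\left(\bm x_{k:k+H}, \boldsymbol{\theta} \mid \bm y_{k:k+H}\right) = -\tfrac12\!\left[\sum_{t=k}^{k+H} \bm v_t^\top \bm Q\, \bm v_t + (\boldsymbol{\theta}-\bar{\boldsymbol{\theta}})^\top \bm R (\boldsymbol{\theta}-\bar{\boldsymbol{\theta}})\right] + \mathrm{const},
\]
which is exactly $-\tfrac12$ times the MHE objective (plus a constant independent of the decision variables). Maximizing the left-hand side over $(\bm x_{k:k+H},\boldsymbol{\theta})$ is therefore equivalent to minimizing the bracketed MHE cost subject to~\eqref{DynamicMHE} and the defining relation $\bm y_t = \bm h(\bm x_t,\boldsymbol{\theta}) + \bm v_t$, so the two problems have the same minimizers and their optimal values coincide up to the stated affine map. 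On the infeasible set, the barrier blocks $\bm\psi_{\bm g},\bm\psi_{\bm e}$ inside $\bm h$ send the posterior density to zero and, symmetrically, send $\bm v_t^\top\bm Q\,\bm v_t$ to $+\infty$, so the equivalence extends there trivially and it suffices to argue on the feasible set where $\bm h$ is finite.

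The main obstacle will be the careful bookkeeping of the joint prior over the states and $\boldsymbol{\theta}$: one must justify that, under the deterministic dynamics and a (possibly improper) flat prior on the initial state $\bm x_k$, the joint density $p(\bm x_{k:k+H},\boldsymbol{\theta})$ is concentrated precisely on the dynamics manifold, so that the MAP search ranges over the same set as the MHE decision variables, and that the normalization dropped in the ``$\propto$'' step does not depend on those variables. The clean way to make this rigorous is to parameterize the manifold by $(\bm x_k,\boldsymbol{\theta})$, substitute $\bm x_{k+1:k+H}$ as deterministic functions thereof, and carry out the density manipulations on this reduced space; the deltas then simply restrict both optimizations identically, and the remaining argument is the elementary Gaussian log-likelihood calculation sketched above.
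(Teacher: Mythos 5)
Your proposal is correct and follows essentially the same route as the paper's proof: Bayes factorization of the joint posterior, collapse of $p(\bm x_{k:k+H}\mid\bm\theta)$ onto the deterministic dynamics (the paper uses indicator functions with a known initial state $\bm x_k^o$ rather than a flat prior on $\bm x_k$), and the Gaussian log-likelihood computation identifying the log-posterior with the negative MHE cost up to constants. Your added remarks on the infeasible set and the normalization bookkeeping are sound refinements of the same argument rather than a different approach.
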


\begin{proof}
Using   Bayes' rule and the Markovian property of \eqref{VirtualDynamics}, we have
\begin{align}
	& \p( { \bm x_{k:k+H}, \boldsymbol{\theta} } | \boldsymbol{y}_{k:k+H})\nonumber\\
	& \propto \p(\bm y_{k:k+H} | { \bm x_{k:k+H}, \bm \theta } )\p(\bm x_{k:k+H} | \bm \theta)p(\bm \theta)\nonumber\\
	& = \prod_{t=k}^{k+H} \p(\bm y_t | \bm x_t, \boldsymbol{\theta}) \cdot \prod_{t=k}^{k+H-1} \p(\bm x_{t+1} | \bm x_{t}, \boldsymbol{\theta}) \cdot \p(\bm x_k | \bm \theta)\cdot p(\bm \theta). \nonumber
\end{align}
Since \eqref{DynamicMHE} is deterministic, we have 
\begin{align*}
\p(\bm x_{t+1} | \bm x_t, \boldsymbol{\theta})  &= \mathds{1} \left[ \bm x_{t+1} - \bm f \left(\bm x_t, \bm \theta \right) \right],  
\end{align*}
where $\mathds{1} [ \cdot ]$ is the indicator function. Also,  $\bm x_k$ is known and assumed to take $\bm x_k^o$, then $\p( x_k  | \bm \theta) = \mathds{1} \left[ \bm x_k -  \bm x_k^o\right]$. It then follows that
\begin{align}
	\log \p( { \bm x_{k:k+H}, \boldsymbol{\theta} } | \boldsymbol{y}_{k:k+H}) \propto \sum_{t=k}^{k+H} \log \p(\bm y_t | { \bm x_t,  \boldsymbol{\theta} } ) + \log p(\boldsymbol{\theta}).
	\nonumber
\end{align}
As $\p(\bm y_t | \bm x_t, \boldsymbol{\theta})\sim\mathcal{N}\left(\bm h \left( \bm x_t, \bm \theta \right), \bm Q^{-1} \right)$ and $ \bm \theta  \sim \mathcal{N}(\boldsymbol{\bar{\theta}}, \bm R^{-1})$, we get
\begin{align*}
	\log \p(\bm y_t | \bm x_t, \boldsymbol{\theta}) &\propto     - \left( \bm y_t - \bm h \left(\bm x_t, \bm \theta\right) \right)^\top  \bm Q  \left( \bm y_t - \bm h \left(\bm x_t, \bm \theta\right) \right) \\ 
&=  - \bm v_t^\top  \bm Q \bm v_t,\\
  \log p(\bm \theta) &\propto - \left( \bm \theta - \bar {\bm \theta} \right)^\top \bm R \left( \bm \theta - \bar {\bm \theta} \right). 
\end{align*} 
Putting together the above, we see that the reward function in \eqref{MAP} is the opposite of the cost function in \eqref{MHE}. The theorem is thus proven.
\end{proof}
Theorem~\ref{MAP-MHE-Equivalence}  reveals that, in a Gaussian setting, the   MAP problem in~\eqref{MAP}  is equivalent to~\eqref{MHE}, which recalls the duality between control and estimation~\cite{2008-CDC-TE}. As such, by solving~\eqref{MAP}, we can address the original OPM problem. However, no analytical solution is available to~\eqref{MAP}, and we will leverage the EnKI method to approximately solve \eqref{MAP}.
 
\subsection{OPM via EnKI}

The EnKI method is a Monte Carlo computational framework for Bayesian parameter inference, which grows out of the well-known ensemble Kalman filtering technique. At the core, the EnKI computes ensemble-based approximation of the  posterior distribution of parameters, and iteratively evolves the ensemble using the Kalman update to improve the approximation accuracy~\cite{IP-NK-2019,StatLetters-SD-2022}. Computation-wise, it is fast, derivative-free, and empirically stable. These features make the EnKI particularly suitable for handling parameter inference for high-dimensional nonlinear systems. 

%Rooted in ensemble Kalman filtering, the ensemble Kalman inversion (EnKI) method provides a sampling-based approximation of the target posterior distribution \cite{IP-NK-2019}. It is particularly effective for estimating unknown parameters in complex systems based on measurement data. This method offers several advantages: it is derivative-free, computationally efficient, and leverages sampling to accelerate convergence. These features make it especially suitable for large-scale estimation problems, where the high dimensionality and nonlinearity of the system would otherwise render traditional techniques impractical. 

Consider  $\p(\bm x_{k:k+H}, \boldsymbol{\theta} | \boldsymbol{y}_{k:k+H})$. Since our focus is on estimating $\boldsymbol{\theta}$, we can marginalize out $\bm x_{k:k+H}$: 
\begin{align}\label{Marginal-Posterior}\nonumber 
    &\p( \bm \theta | \bm y_{k:k+H}) = \int \p( { \bm x_{k:k+H}, \bm \theta} | \bm y_{k:k+H}) d \bm x_{k:k+H} \nonumber \\
    &\propto \underbrace{\int \p(\bm y_{k:k+H} | {\bm x_{k:k+H},   \bm \theta } ) \p(  \bm x_{k:k+H} | \bm \theta) d \bm x_{k:k+H}}_{ \p(\bm y_{k:k+H} | \bm \theta) } \cdot p\left( \bm \theta \right).  
\end{align} 
Here, to enable   iterative approximation of $\p( \bm \theta | \bm y_{k:k+H})$,  we consider building tempered posteriors and  rewrite~\eqref{Marginal-Posterior} as  
\begin{align} 
    \p( \bm \theta | \bm y_{k:k+H}) \propto \prod_{\ell=1}^N    \left[ \p( \bm y_{k:k+H} |  \bm \theta ) \right] ^{\lambda_\ell} \cdot  p\left( \bm \theta \right),
    \label{TemperedPosterior}
\end{align}
where  $\ell$ is the iteration index, $[\cdot]^{\lambda_\ell}$ denotes the power of $\lambda_\ell$, and $  \lambda_{\ell} \in (0, 1)$ is a hyperparameter used to temper the likelihood, with $\sum_{\ell=1}^N \lambda_{\ell}=1$. As will be seen later, $\lambda_{\ell}$ will play a step-size role   in the ensemble  update procedure \cite{MWR-PH-2016}.  We define $\bm Y = \bm y_{k:k+H}$ for   notational simplicity and following \eqref{TemperedPosterior}, obtain
\begin{align} \label{Tempered-Posterior-Iteration}
    p_{\ell+1}( \bm \theta \, | \, \bm Y ) \propto \left[  \p(  \bm Y | \bm \theta) \right]^{\lambda_\ell} \cdot  p_\ell \left( \bm \theta  \, | \, \bm Y \right),
\end{align}
where $p_\ell \left( \bm \theta \, | \, \bm  Y \right)$ is the tempered posterior at iteration  $\ell$, and $p_1 \left( \bm \theta \, | \, \bm Y\right) = p(\bm \theta) \sim \mathcal{N}\left(\boldsymbol{\bar{\theta}}, \bm R^{-1} \right)$.

Suppose that $p_\ell \left( \bm \theta \, | \, \bm Y \right)$ and $p_\ell \left( \bm \theta \, | \, \bm Y \right)$ are  weakly approximated  by  two ensembles of particles or samples, $\bm \theta_\ell^i$ and $\bm Y_\ell^i$ for $i=1,\ldots, M$, respectively. Note that $\bm Y_\ell^i$ results from $\bm \theta_\ell^i$, as will be explained later.   Guided by~\eqref{Tempered-Posterior-Iteration}, the EnKI will perform a Kalman update of the ensemble for $p_{\ell+1} \left( \bm \theta \, | \, \bm Y \right)$ as follows:
\begin{align}\label{sample-update}
\boldsymbol{\theta}_{\ell+1}^i= \boldsymbol{\theta}_{\ell}^i +\bm{\Sigma}^{\boldsymbol{\theta} \bm{y}}_\ell \left( \bm{\Sigma}^{\bm{y}}_\ell +  \lambda_{\ell}^{-1}\bar{\bm{Q}} \right)^{-1} \left(\bm Y -\bm{Y}_{\ell}^i \right),
\end{align}
for $\ell=1,\ldots, N$ and $i=1, \ldots, M$, where
\begin{subequations}\label{ensemble-based-computation}
	\begin{align}
		\hat{\bm \theta}_\ell &= \frac{1}{M}\sum_{i=1}^M \boldsymbol{\theta}_{\ell}^i, \quad \widehat{ \bm{ Y} }_{\ell} = \frac{1}{M}\sum_{i=1}^M \bm{Y}_{\ell}^i, \label{Sample-Mean}\\
		\bm{\Sigma}^{\bm \theta}_\ell &= \frac{1}{M-1}\sum_{i=1}^M \left( \boldsymbol{\theta}_\ell^i -   \hat{\bm \theta}_{\ell}\right)\left(  {\bm \theta}_\ell^i -   \hat{\bm  \theta}_{\ell}\right)^{\top}, \label{Sample-Cov}\\
		\bm{\Sigma}^{\bm{y}}_\ell &= \frac{1}{M-1}\sum_{i=1}^M \left( {\bm Y}_\ell^i -  \widehat{\bm Y}_{\ell} \right) \left( {\bm Y}_\ell^i -  \widehat{\bm Y}_{\ell} \right)^{\top}, \\
		\bm{\Sigma}^{\boldsymbol{\theta}\bm{y}}_\ell  &= \frac{1}{M-1}\sum_{i=1}^M \left( {\bm \theta}_\ell^i -  \hat{\bm \theta}_{\ell}\right)\left( {\bm Y}_{\ell}^i -  \widehat{\bm Y}_{\ell} \right)^{\top},\\ \label{Sample-CrossCov}
\bar{\boldsymbol{Q}} &= \mathrm{diag}(\bm Q^{-1}, \ldots, \bm Q^{-1}).
	\end{align}
\end{subequations}
In above, $\hat{\bm \theta}_\ell$ and $\hat{\bm Y}_\ell$ are the sample means, and $\bm{\Sigma}^{\bm \theta}_\ell$, $\bm{\Sigma}^{\bm Y}_\ell$ and $\bm{\Sigma}^{\bm \theta \bm Y}_\ell$ are the sample covariances based on the ensembles. Generally, $\lambda_{\ell}$ takes a small value in the early iterations, and increases as the iteration moves forward. One can also use a bisection search algorithm to identify the best $\lambda_{\ell}$ for each iteration \cite{StatLetters-SD-2022}. Upon the end of the iterative procedure when $\ell = N$, we can compute and output $\hat {\bm \theta}$ and $\bm{\Sigma}^{\bm \theta} $ as
\begin{subequations}\label{theta-estimate-output}
\begin{align}\label{Samples-Posterior}
\hat{\bm \theta}^*  &= \frac{1}{M}\sum_{i=1}^M \boldsymbol{\theta}_{N+1}^i,\\
\bm{\Sigma}^{\bm \theta} &= \frac{1}{M-1}\sum_{i=1}^M \left( \boldsymbol{\theta}_{N+1}^i -   \hat{\bm \theta}^* \right)\left(  {\bm \theta}_{N+1}^i -   \hat{\bm  \theta}^* \right)^{\top}. 
\end{align}
\end{subequations}
Here, $\bm{\Sigma}^{\bm \theta}$ quantifies the uncertainty in the estimation, and its computation can be omitted to enhance efficiency if a user has no interest in assessing the uncertainty.

To complete the computation in~\eqref{sample-update}-\eqref{ensemble-based-computation}, we must  compute $\bm Y_\ell^i$ for $i=1,\ldots,M$ at every iteration. One can regard $\bm Y_\ell^i$ as samples drawn from $\p(\bm y_{k:k+H} | \bm \theta)$. For the sake of clarity, we recall~\eqref{Marginal-Posterior} and restate
\begin{align} \label{y-theta-likelihood} \nonumber
\p(\bm y_{k:k+H} | \bm \theta)  & = \int \p(\bm y_{k:k+H} | { \bm x_{k:k+H},   \bm \theta} ) \\ 
& \quad\quad \quad \cdot \p(  \bm x_{k:k+H} | \bm \theta) d \bm x_{k:k+H}.
\end{align} 
Since 
\begin{align*}
\p(  \bm x_{k:k+H} | \bm \theta) &\propto \prod_{t=k}^{k+H-1} \p( \bm x_{t+1} | {\bm x_t, \bm \theta} ) \cdot \p(\bm x_k | \bm \theta)\\ 
& = \prod_{t=k}^{k+H-1}  \mathds{1} \left[ \bm x_{t+1} - \bm f \left(\bm x_t, \bm \theta \right) \right] \cdot \mathds{1} \left[ \bm x_k -  \bm x_k^o\right],
\end{align*}
we can compute  the ensemble for $\p(   \bm x_{k:k+H} |   \bm \theta  ) $ by
\begin{align}\label{Traj-Gen}
\bm {x}^i_{t+1,\ell} = \boldsymbol{f}\left(\boldsymbol{x}^i_{t,\ell}, \boldsymbol{\theta}_{\ell}^i \right), \quad i = 1, \ldots, M,
\end{align}
for $t=k,\ldots,k+H-1$. 
Then, by~\eqref{y-theta-likelihood}, the ensemble for $\p(\bm y_{k:k+H} | \bm \theta)$ can be computed by
\begin{align}\label{Meas-Gen}
\bm y_{t,\ell}^i  = \boldsymbol{h}(\boldsymbol{x}_{t,\ell}^i,\boldsymbol{\theta}_\ell^i) + \bm v_{t,\ell}^i,
\end{align}
for $t=k,\ldots,k+H$, where $\bm v_{t,\ell}^i \sim \mathcal{N}(\bm 0, \bm R)$. 

The above shows the computational procedure of the EnKI method to infer optimal $\bm \theta$ for solving the OPM problem. We name the obtained algorithm as \texttt{EnKI-OPM}, summarizing it in Algorithm~\ref{EnKI-OPM}. This algorithm harnesses the power of sampling and Kalman ensemble update to quickly search the parameter space. It thus offers much faster computation than gradient-based optimization, which makes it competent for handling the OPM problem for large-scale BESS. Our simulation results in Section~\ref{Simulation-Results} will further validate this. 

\begin{algorithm}[t]
\fontsize{9.2}{10}
  \caption{\texttt{EnKI-OPM} Ensemble Kalman inversion-based optimal power management} \label{NMPC-EnKS}
  \begin{algorithmic}[1]
\State Set up the OPM problem as in~\eqref{NMPC} 

\State Convert the problem to a Bayesian parameter inference problem as in~\eqref{MAP}
 
\State Initialize by setting  $\ell = 0$, $\hat{\bm \theta}_0 = \bar {\bm \theta}$, and $\bm{\Sigma}^{\boldsymbol{\theta}}_0 = \bm R^{-1}$
\State Draw samples $\boldsymbol{\theta}_0^i \sim  \mathcal{N}(\hat {\bm \theta}_0, \bm{\Sigma}^{\boldsymbol{\theta}}_0)$, for $i=1, \ldots, M$
%\While{$\Delta \boldsymbol{\theta}^\ell \geq \epsilon$} 
\For{$\ell = 1,\ldots, N$}
\For{$t = k,\ldots, k+H$}
\State Generate $\bm{x}_{t, \ell}^i$ using~\eqref{Traj-Gen}
\State Generate $\bm y_{t,\ell}^i$ as in~\eqref{Meas-Gen}
\EndFor
\State Construct $\bm Y_\ell^i =\begin{bmatrix}\bm y_{k,\ell}^i & \dots & \bm y_{k+H,\ell}^i \end{bmatrix}^\top$

\State Specify $\lambda_{\ell}$ via bisection search

\State Update $\boldsymbol{\theta}_{\ell+1}^i$ via~\eqref{sample-update}-\eqref{ensemble-based-computation}

%\State Compute $\boldsymbol{\bar \theta}^{\ell+1}$ using~\eqref{Samples-Posterior}

%\State Compute $\boldsymbol{\bar \theta}^{\ell}$ and $\bm{\bar y}^{\ell}$ via~\eqref{Sample-Mean}
%\State Compute $\bm{\Sigma}^{\boldsymbol{\theta}, \ell}$, $\bm{\Sigma}^{\bm{y}, \ell}$, and $\bm{\Sigma}^{\boldsymbol{\theta} \bm{y}, \ell}$ via~\eqref{Sample-Cov}-\eqref{Sample-CrossCov}
%\State Specify $\lambda^{\ell}$ via bisection search
%\State Update $\boldsymbol{\theta}^{\ell+1, i}$ via~\eqref{sample-update}
%\State Compute $\boldsymbol{\bar \theta}^{\ell+1}$ using~\eqref{Samples-Posterior}
%\State Compute criteria $\Delta \boldsymbol{\theta}^{\ell+1} \leftarrow \lVert\boldsymbol{\bar \theta}^{\ell+1} - \boldsymbol{\bar \theta}^{\ell}\rVert$
\State $\ell \leftarrow \ell + 1$
\EndFor

\State Compute $\hat{\bm \theta}^*$ via~\eqref{theta-estimate-output} 

\State Compute  the PSR $\bm \mu_k$ using~\eqref{Form-1}
\State Apply $\bm \mu_k$ to BESS
\end{algorithmic}
\label{EnKI-OPM}
\end{algorithm}

\section{Low-Level Controller}
The proposed OPM approach has computed the optimal PSR for the cells. As ratios, the PSR can be used to split the total charging/discharging power among the cells, even when the actual output power demand is uncertain and deviates from the prediction. To implement this approach, we must develop a lower-level controller that utilizes the PSR to determine the exact power quotas for the cells and then extract the power from each cell. Fig.~\ref{Low-Level} shows the lower-level controller and its interface with the higher-level OPM approach and the BESS.

The lower-level controller aims to regulate the terminal voltage of the BESS---once the voltage regulation is achieved, the control loop will drive the BESS towards power supply-demand balance. In other words, we regulate the BESS terminal voltage to estimate and overcome the uncertainties in the output power demand. Specifically, we employ the following proportional-integral (PI) controller:
\begin{equation}
	\tilde{P}_{\textrm{out}} = K_P(V^*_{\textrm{out}} - V_{\textrm{out}}) + K_I\int (V^*_{\textrm{out}} - V_{\textrm{out}})dt, 
\end{equation}
where $\tilde{P}_{\textrm{out}}$ quantifies the disparity between the predicted and actual output power demands; $V_{\textrm{out}}$ and $V^*_{\textrm{out}}$ denote the actual and desired BESS terminal voltage, respectively, and $K_P$ and $K_I$ are the coefficients for the proportional and integral terms, respectively. It is evident that, if the output power prediction ${P}_{\textrm{out}}$ matches the actual demand perfectly, then $V_{\textrm{out}} = V^*_{\textrm{out}}$ and $\tilde{P}_{\textrm{out}} = 0$. When there is a mismatch, the effective power demand imposed on the BESS is $P_{\textrm{out}} + \tilde{P}_{\textrm{out}}$.

Proceeding further, we can distribute $P_{\textrm{out}}+\tilde{P}_{\textrm{out}}$ among the cells using the PSR. Note that the power allocation must satisfy the individual cells' power limits to ensure safety, particularly when $\tilde{P}_{\textrm{out}}$ is large. We thus propose to split the power as follows:
\begin{equation}
	P_{b_j}^* = \max\left\{\min\{\mu_j^*(P_{\textrm{out}} + \tilde{P}_{\textrm{out}}), P_{b_j}^{\textrm{max}}\},P_{b_j}^{\textrm{min}}\right\},
\end{equation}
where $P_{b_j}^{\textrm{max}/\textrm{min}}=u_ji_j^{\textrm{max}/\textrm{min}}$ are cell $j$'s upper/lower power limits. Then, $P_{b_j}^*$ for $j=1, 2, \ldots, n$ are fed to the DC/DC converter controllers, which generate pulse-width modulation (PWM) signals to extract the exact power from each cell.

The above lower-level controller is designed to handle fast power distribution and regulation for the cells within the BESS. It integrates seamlessly with the proposed OPM approach to create a hierarchical control framework for system-to-cell power management. The two layers of this hierarchy can operate on different time scales: while the lower-level controller runs in real time, the OPM approach can run at slower rates. This is possible because the cells' slow dynamics and the gradual shifts in output power demand allow the PSR to be updated over longer intervals, thus reducing computational load.
\begin{figure*}[t]
	\centering
	\includegraphics[trim={0.2cm 0cm 0cm 0cm},clip,width=12cm]{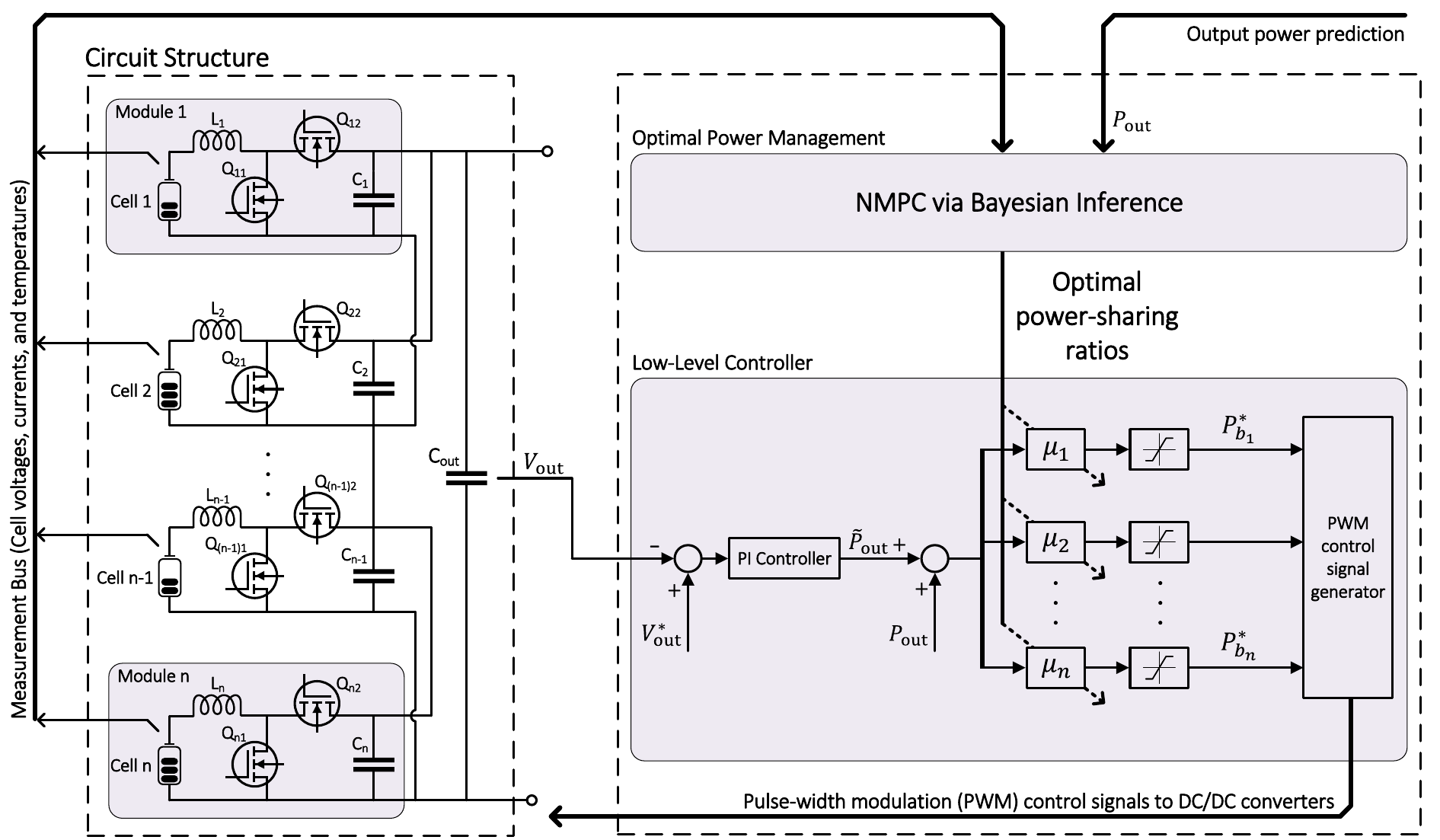}
      \caption{The interactions of the proposed optimal power management approach, low-level controller, and BESS.}
      \label{Low-Level}
\end{figure*}
\section{Simulation Results}\label{Simulation-Results}
\begin{table}[!t]
	\renewcommand{\arraystretch}{1.3}
	\caption{Specifications of the considered large-scale BESS}
	\centering
	\label{Table_1}
	%\centering
	\resizebox{\columnwidth}{!}{
		\begin{tabular}{l l l}
			\hline\hline \\[-3mm]
			\multicolumn{1}{c}{Symbol} & \multicolumn{1}{c}{Parameter} & \multicolumn{1}{c}{Value [Unit]}  \\[1.6ex] \hline
			$ n $ & Number of battery cells & 200 \\
			$ v $  & Cell nominal voltage & 3.6     [V] \\
			$ Q $ & Cell nominal capacity & 2.5     [Ah] \\ 
			$ R $ & Cell internal resistance & 31.3     [m$\Omega$] \\
			$ [q^{\textrm{min}},q^{\textrm{max}}] $ & Cell SoC limits  & [0.05,0.95] \\ 
			$ [i^{\textrm{min}},i^{\textrm{max}}] $ & Cell current limits & [-5,5]     [A] \\ 
			$ C_{\textrm{th}} $ & Specific thermal capacitance & 40.23    [J/K] \\ 
			$ R_{\textrm{conv}} $ & Convection thermal resistance & 41.05     [K/W] \\ 
			$ T_{\textrm{env}} $ & Environment temperature & 298     [K] \\ 
			$ \Delta q $ & SoC balancing threshold & 0.5\% \\
			$ \Delta T $ & Temperature balancing threshold & 1    [K] \\
			$ \Delta t $ & Sampling time & 1     [s]\\
			$ H $ & Horizon length & 10     [s]\\
			\hline\hline
		\end{tabular}
	}
\end{table}

This section presents the simulation results for our proposed OPM approach. We consider a large-scale BESS comprising 200 modules, the specifications of which are summarized in Table~\ref{Table_1}. The battery cells are based on Samsung INR18650-25R cells whose parameters are identified in \cite{2020-JES-NT}. These cells' SoC/OCV relationship and internal resistance are described by
\begin{align}
	u(q) &= 3.3 + 2.61q - 9.36q^2 + 19.7q^3 - 19.0q^4 + 6.9q^5, \nonumber \\
	R(q) &= 0.0313 + 0.0678e^{-13.2q}. \nonumber
\end{align}
The output power demand is approximately 2 kW, alternating between charging and discharging every 20 minutes. It is subject to a slowly evolving noise following a uniform distribution  
$\mathcal{U}(-200, 200)$ W. This noise term is intended to simulate the deviation between the predicted and actual power demands, as shown in Fig.~\ref{FIG_SIM_1}. The simulations cover one hour of BESS operation, with a sampling time of one second. The OPM's horizon length is  $H=10$ seconds.

\begin{figure}[!t]\centering
	\includegraphics[trim={0cm 0cm 0cm 0cm},clip,width=\linewidth]{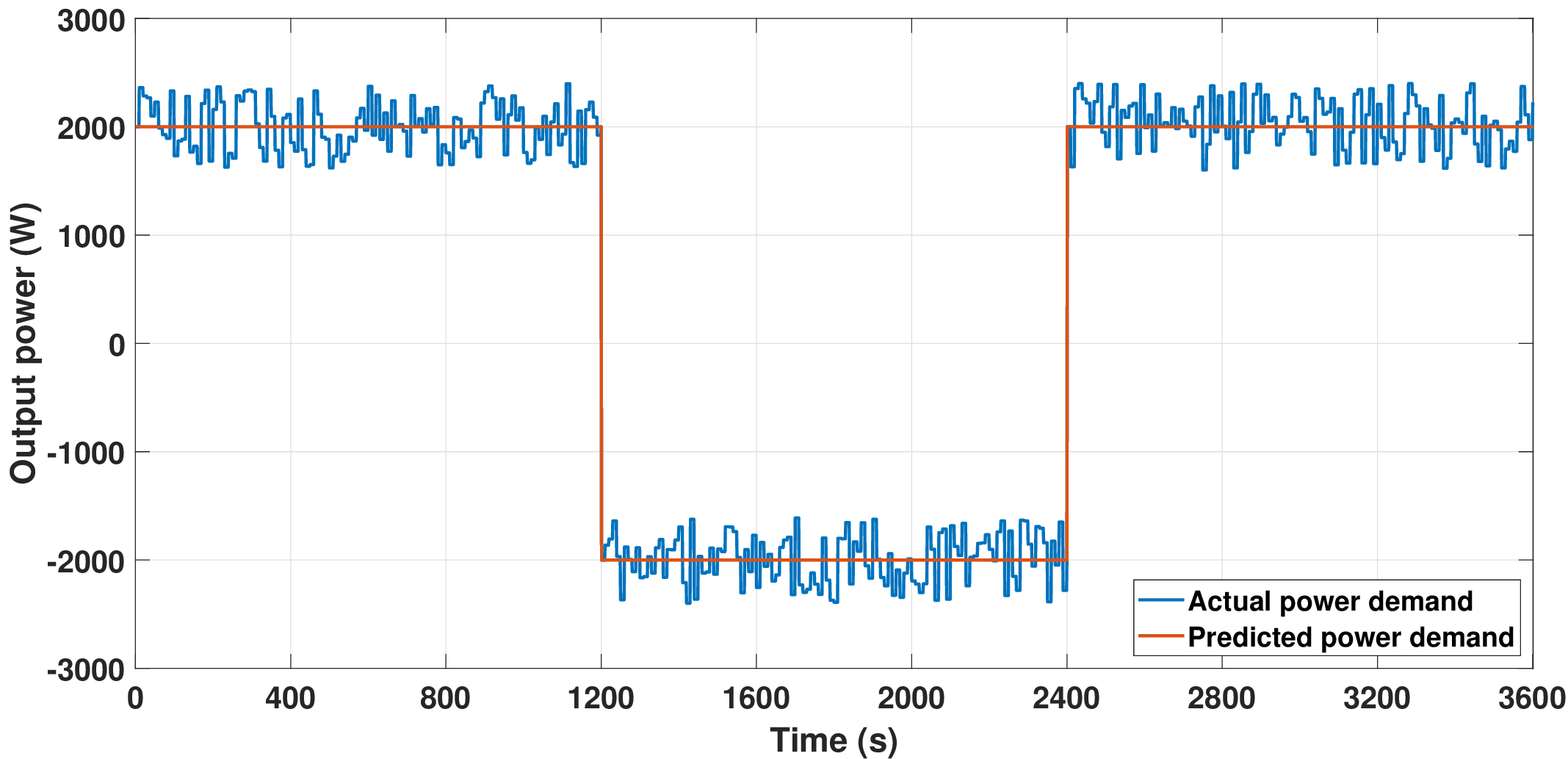}
	\caption{The output power profile.}\label{FIG_SIM_1}
\end{figure}

The ideal barrier functions in \eqref{h_Function} reflect hard constraints with zero tolerance for violations, potentially leading to less numerical stability in EnKI implementation. To mitigate the issue, we use the following as a mild compromise:
\begin{subequations}
	\begin{align}
		\psi_{g}(x) &= \begin{cases} 0 & x \leq 0, \\ \frac{1}{\alpha_g}\ln(1+\exp(\beta_g x)) & x > 0,\end{cases} \\
		\psi_{e}(x) &= \frac{1}{\alpha_e}x^{\beta_e},
	\end{align}
	\label{Barrier-2}
\end{subequations}
\hspace{-3pt}where $\alpha_{g/e}$ and $\beta_{g/e}$ are tunable parameters. Note that we apply \eqref{Barrier-2} element-wise to vectors $\bm g \left( \bm x_t, \bm \theta \right)$ and $\bm e \left( \bm x_t, \bm \theta \right)$ in \eqref{h_Function}. Additionally, $\xi_q=8$ and $\xi_\Taxi=12$ in \eqref{Functions}. The cells' initial SoC and internal resistance are heterogeneous, randomly initialized by drawing from $\mathcal{U}(0.70,0.75)$ and  $\mathcal{U}(31.3, 41.3)$ m$\Omega$, respectively. However, they are set to have the same initial temperature. The obtained simulation results are reported below.

Fig.~\ref{FIG_SIM_2} illustrates the SoC and temperature balancing performance of the proposed approach, and Fig.~\ref{FIG_SIM_3} shows the evolution of the control parameters of the parameterized control policy in \eqref{Form-1}. The two figures together highlight several distinct stages within the one-hour operation, due to the proposed OPM approach. 

\begin{itemize}
	\item {\bf 0 to about 300 seconds.} This stage prioritizes SoC balancing to overcome the initial SoC differences among the cells, as $\theta_1$, which weighs the effect of the relative SoC imbalance, is close to 1 (Fig.~\ref{FIG_SIM_3}). The cells converge in SoC (Fig.~\ref{FIG_SIM_2} (a)-(b)), while there is an increase in the temperature differences (Fig.~\ref{FIG_SIM_2} (c)-(d)). 

	\item {\bf About 300 to 1000 seconds.} This time interval mainly focuses on power loss minimization by distributing the power based on the cells' relative internal resistance while maintaining the SoC values balanced with $\theta_1, \theta_3 > \theta_2$ (Fig.~\ref{FIG_SIM_3}). The temperature differences remain bounded in this stage (Fig.~\ref{FIG_SIM_2} (c)-(d)).
	
	\item {\bf About 1000 to about 1200 seconds.} The temperature differences reach their maximum allowed threshold and it is no longer possible to simultaneously maintain SoC and temperature values balanced. The proposed approach keeps the temperature values balanced with $\theta_2 \gg \theta_1, \theta_3$ (Fig.~\ref{FIG_SIM_3}) while allowing SoC values to be slightly unbalanced (Fig.~\ref{FIG_SIM_2} (a)-(b)).
	
	\item {\bf About 1200 to about 1700 seconds.} The BESS switches from discharging to charging and the emphasis of this stage is on power loss minimization, with $\theta_3 \gg \theta_1,\theta_2$ (Fig.~\ref{FIG_SIM_3}), following the achievement of the SoC and temperature balancing in the preceding stage. This emphasis also leads to an increase in the SoC and temperature differences during this stage (Fig.~\ref{FIG_SIM_2}), but the differences remain bounded. 
	
	\item {\bf About 1700 to 2400 seconds.} The BESS has reached a balance among all three objectives at the end of the previous stage, and thus puts an almost equal emphasis in this stage (Fig.~\ref{FIG_SIM_3}). 
	
	\item {\bf 2400 to about 2900 seconds.} The BESS enters the discharging mode again. This shift leads the BESS to enforce power loss minimization and without violating the SoC and temperature balancing constraints (Fig.~\ref{FIG_SIM_2}). As such, $\theta_3 \gg \theta_1,\theta_2$ (Fig.~\ref{FIG_SIM_3}). 

	\item {\bf About 2900 to about 3600 seconds.} During this stage, the cells have been well balanced in SoC and temperature (Fig.~\ref{FIG_SIM_2}) and the BESS balances the three objectives to maintain this condition.
\end{itemize}

\begin{figure*}[t]
	    \centering
    \subfloat[\centering ]{{\includegraphics[trim={0cm 0cm 0cm 0cm},clip,width=8.5cm]{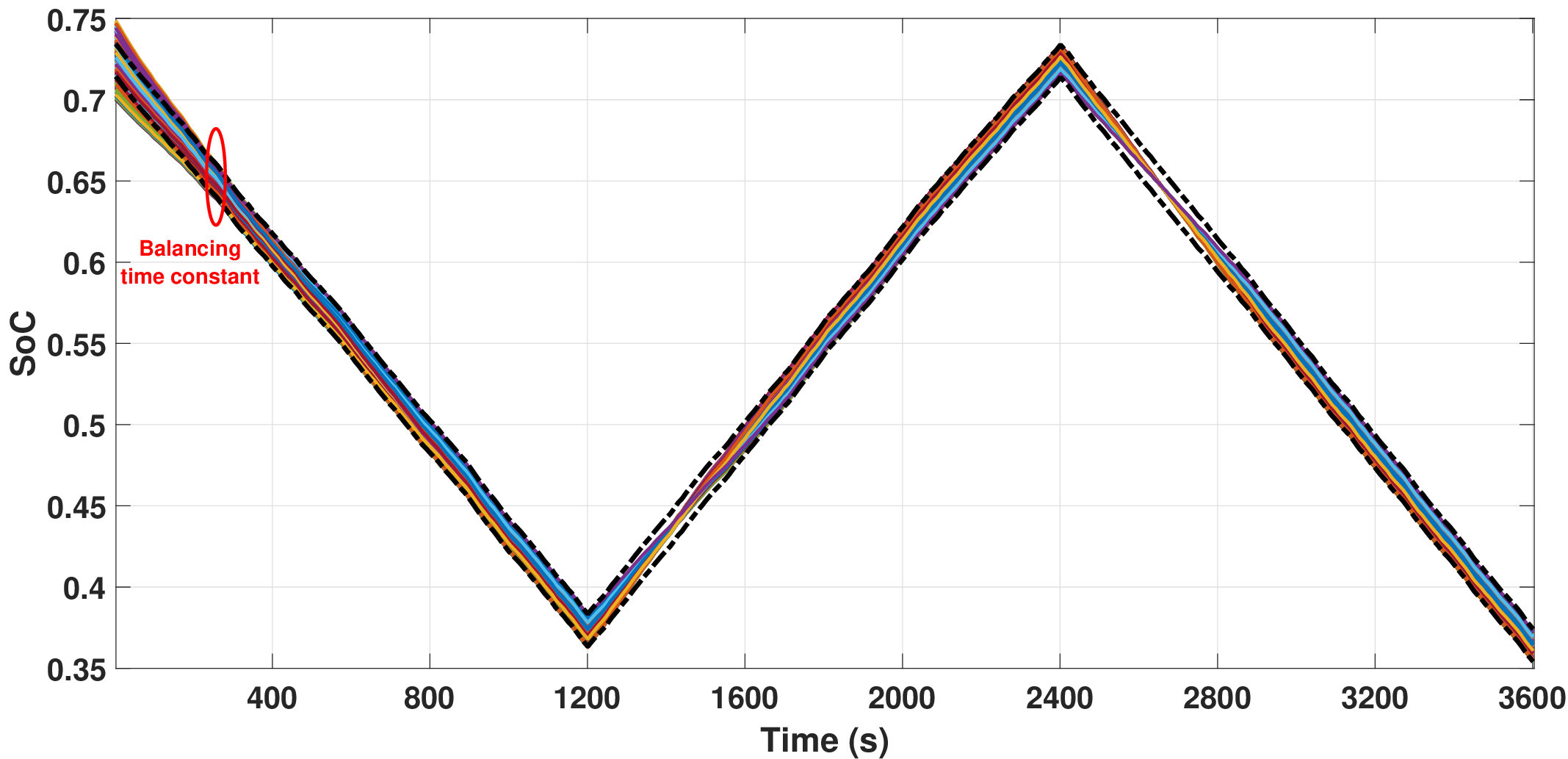} }}
	\,
    \subfloat[\centering ]{{\includegraphics[trim={0cm 0cm 0cm 0cm},clip,width=8.5cm]{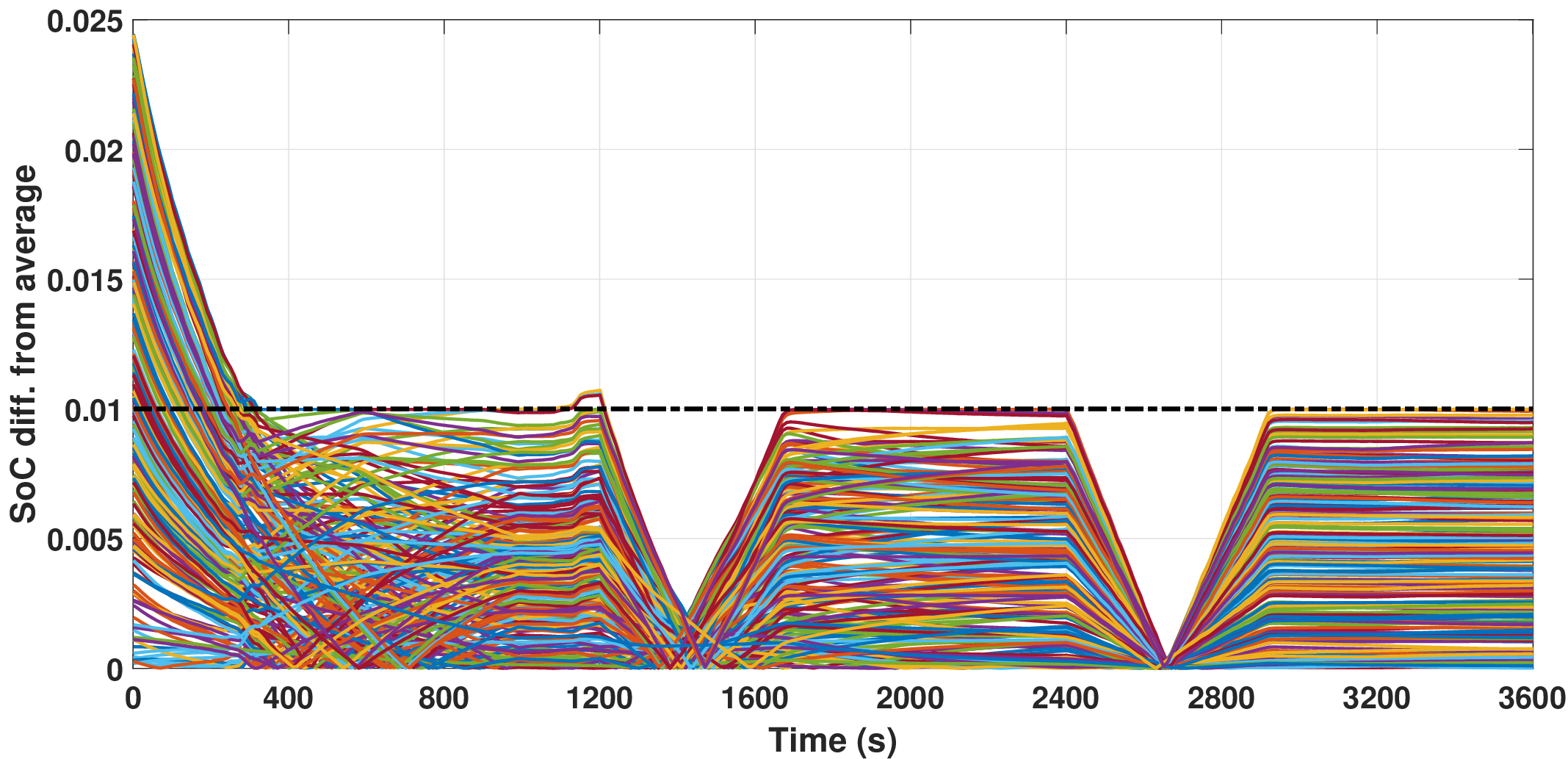} }}
    \,
    \subfloat[\centering ]{{\includegraphics[trim={0cm 0cm 0cm 0cm},clip,width=8.5cm]{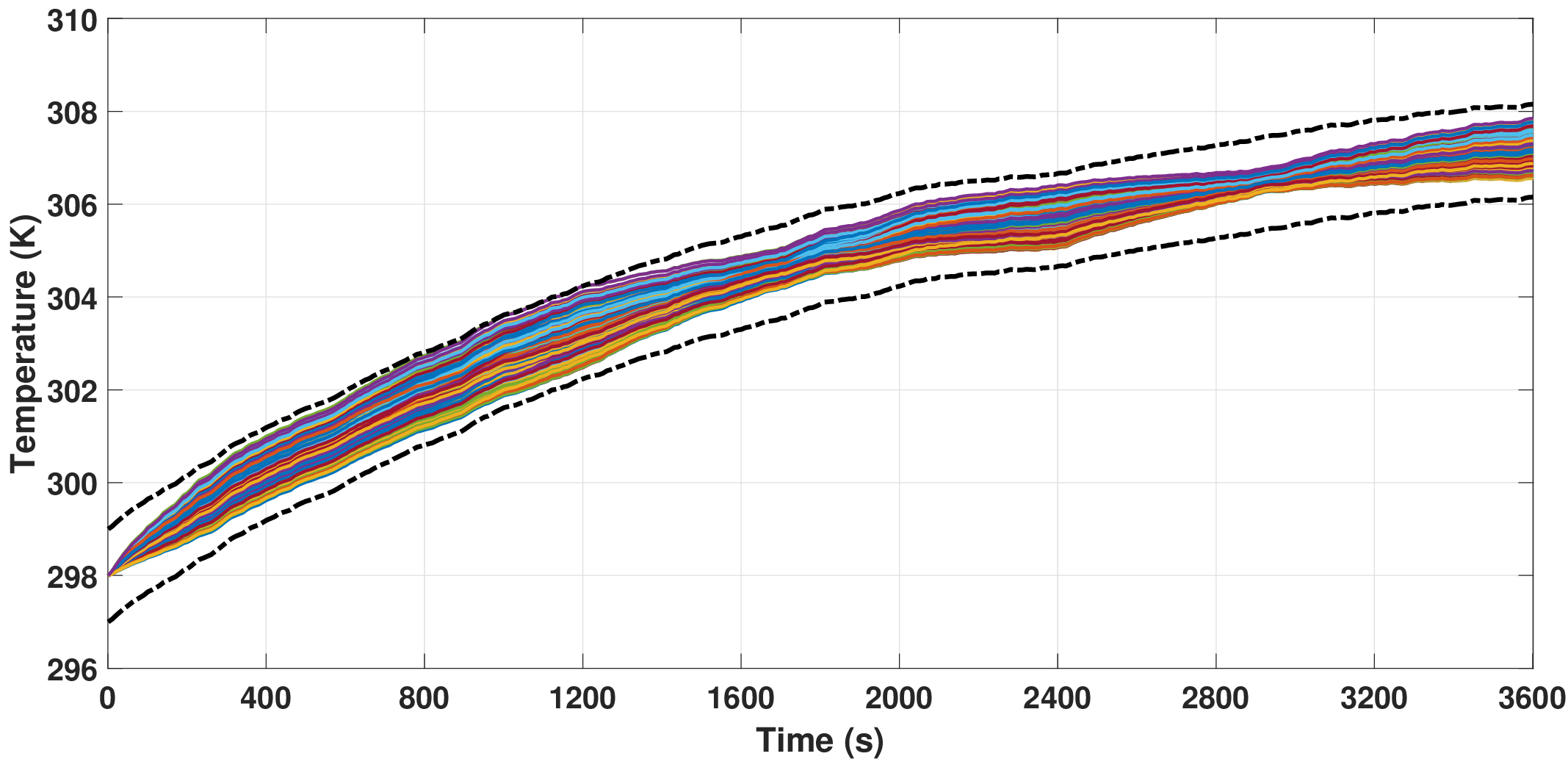} }}
    \,
    \subfloat[\centering ]{{\includegraphics[trim={0cm 0cm 0cm 0cm},clip,width=8.5cm]{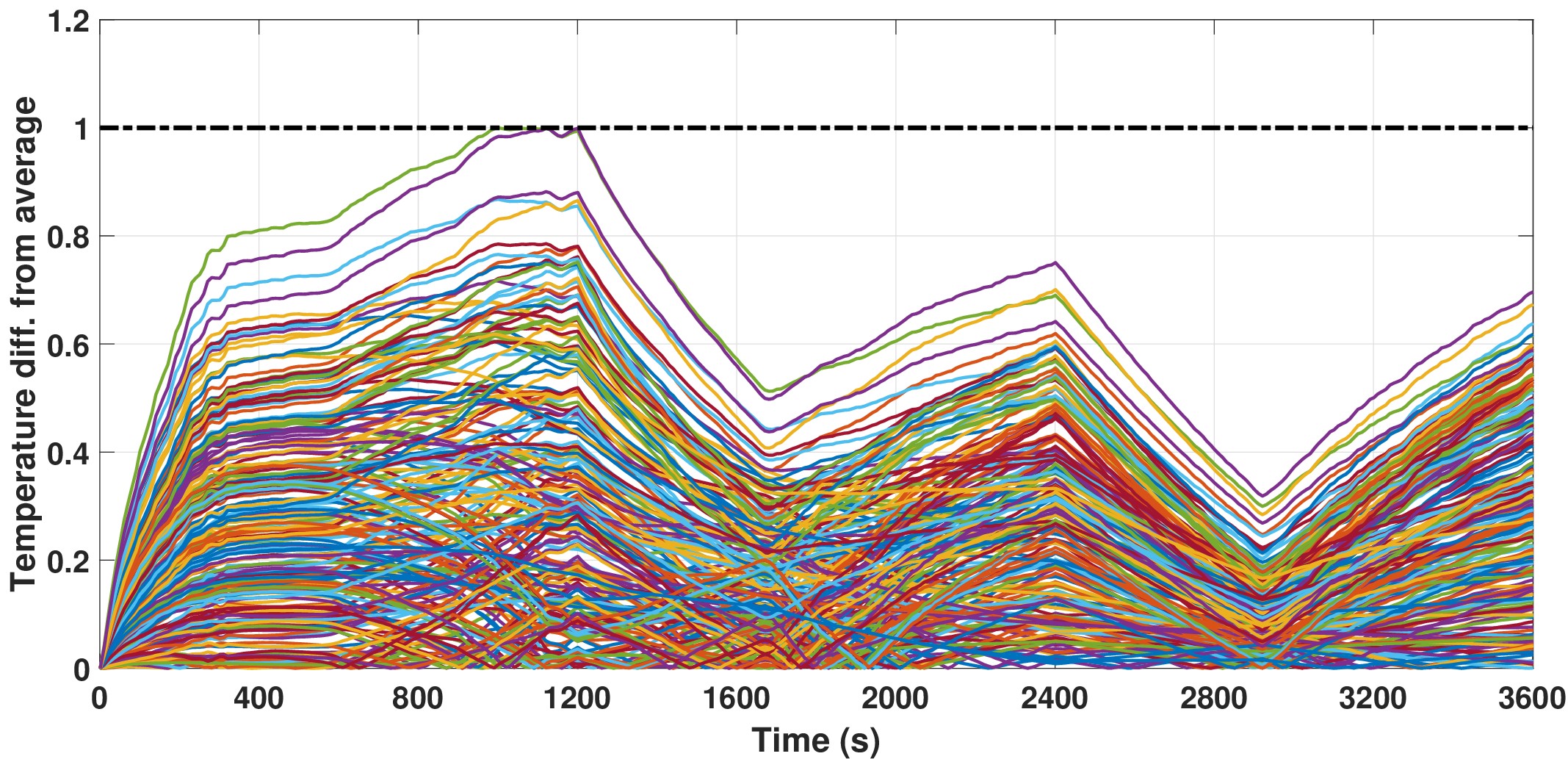} }}
    \caption{Simulation results of the SoC and temperature balancing. (a) The SoC of the cells. (b) The SoC difference of the cells from the average. (c) The temperature of the cells. (d) The temperature difference of the cells from the average.}
    \label{FIG_SIM_2}
\end{figure*}

\begin{figure}[!t]\centering
	\includegraphics[trim={0cm 0cm 0cm 0cm},clip,width=\linewidth]{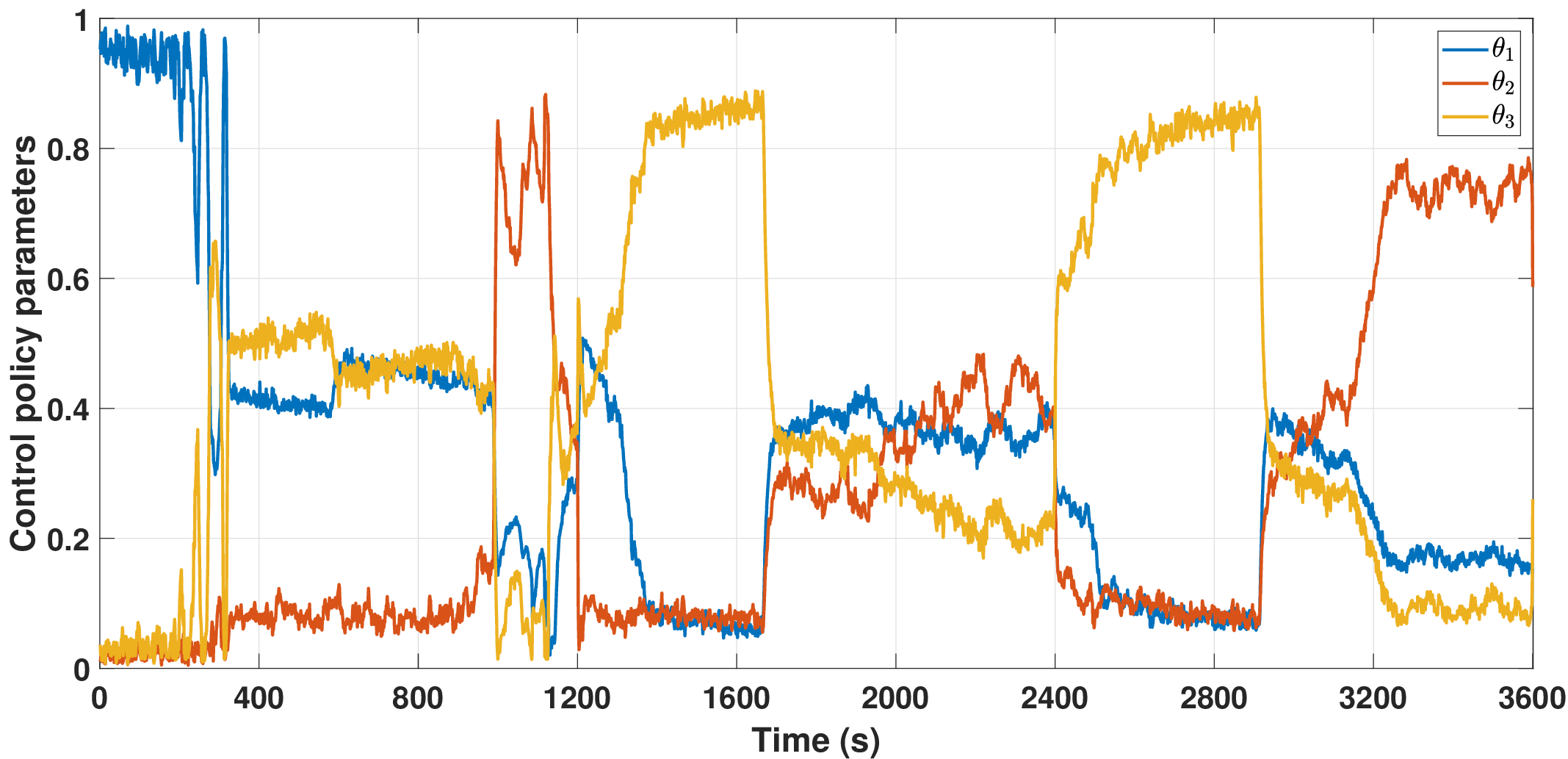}
	\caption{Control policy parameters.}\label{FIG_SIM_3}
\end{figure}

Fig.~\ref{FIG_SIM_4} illustrates the PSR profiles. As is shown, the cells' PSR are adjusted by the proposed OPM approach based on their operating conditions. In general, the cells have largely different PSR when the BESS must focus more on SoC balancing or power loss minimization. But the difference becomes much smaller when the BESS attains a balance among all three objectives. For instance, the PSR converges toward about $1/n$ within the stage of about 2900 to 3600 seconds.

\begin{figure}[!t]\centering
	\includegraphics[trim={0cm 0cm 0cm 0cm},clip,width=\linewidth]{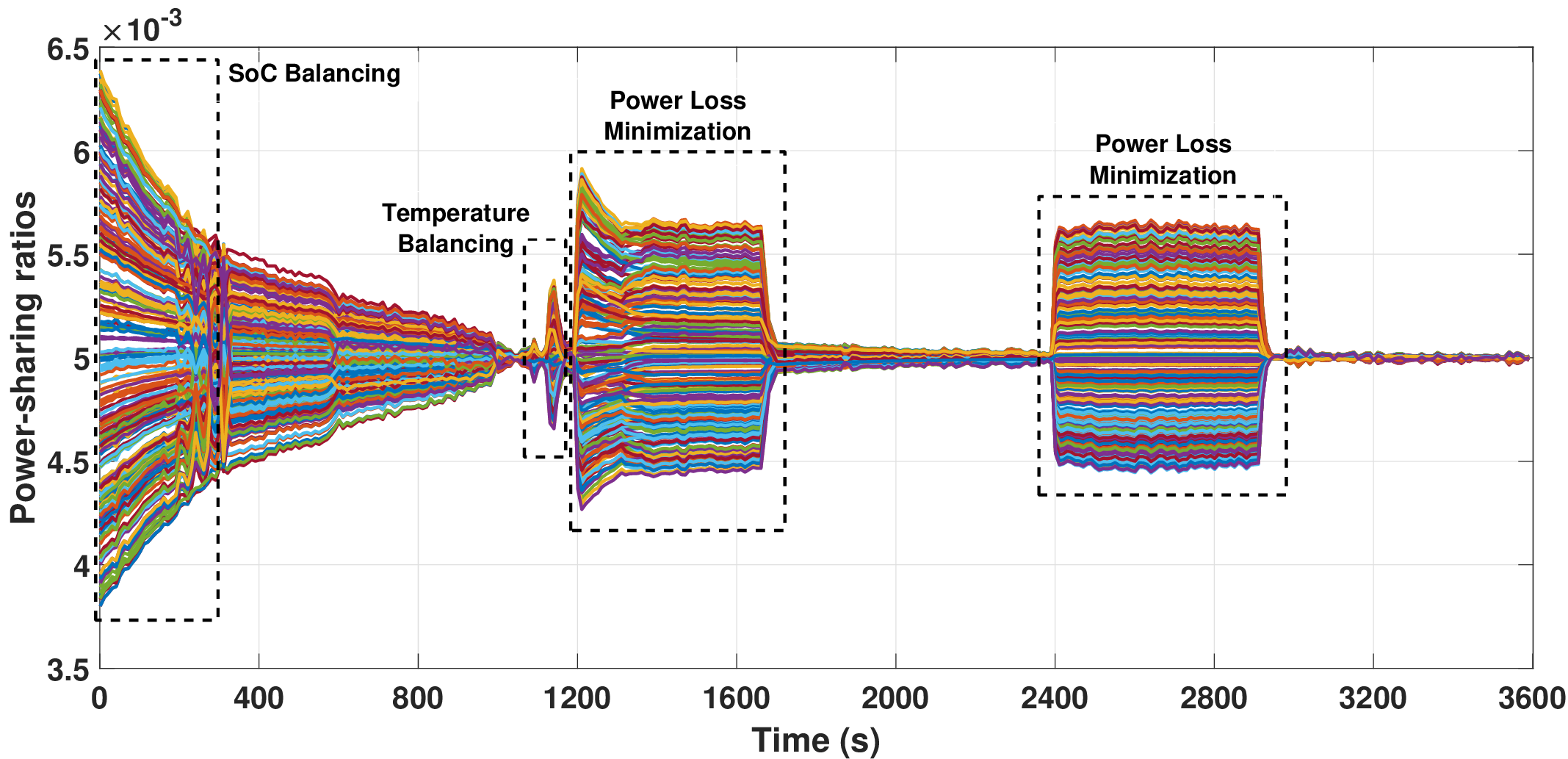}
	\caption{The cells' power sharing ratios.}\label{FIG_SIM_4}
\end{figure}

%After computing the utilization factors, we proceed to update them in the low-level controller, as depicted in Fig.\ref{Low-Level}. The low-level controller then generates an error signal by comparing the BESS output voltage with its reference value. The PI controller, with coefficients set to $K_P=2$ and $K_I=5$ minimizes this error signal by producing a residual power signal, denoted as $\tilde{P}_{\textrm{out}}$, as illustrated in Fig.\ref{FIG_SIM_5}. This residual power signal captures the uncertainty in our output power predictions. Subsequently, the low-level controller combines this residual power with the predicted output power to compute the reference power for each cell based on their respective utilization factors. By doing so, the imperfect output power predictions are circumvented while maintaining optimality through the utilization factors. 
%
%\begin{figure}[!t]\centering
%	\includegraphics[trim={2.4cm 0 2.75cm 1cm},clip,width=\linewidth]{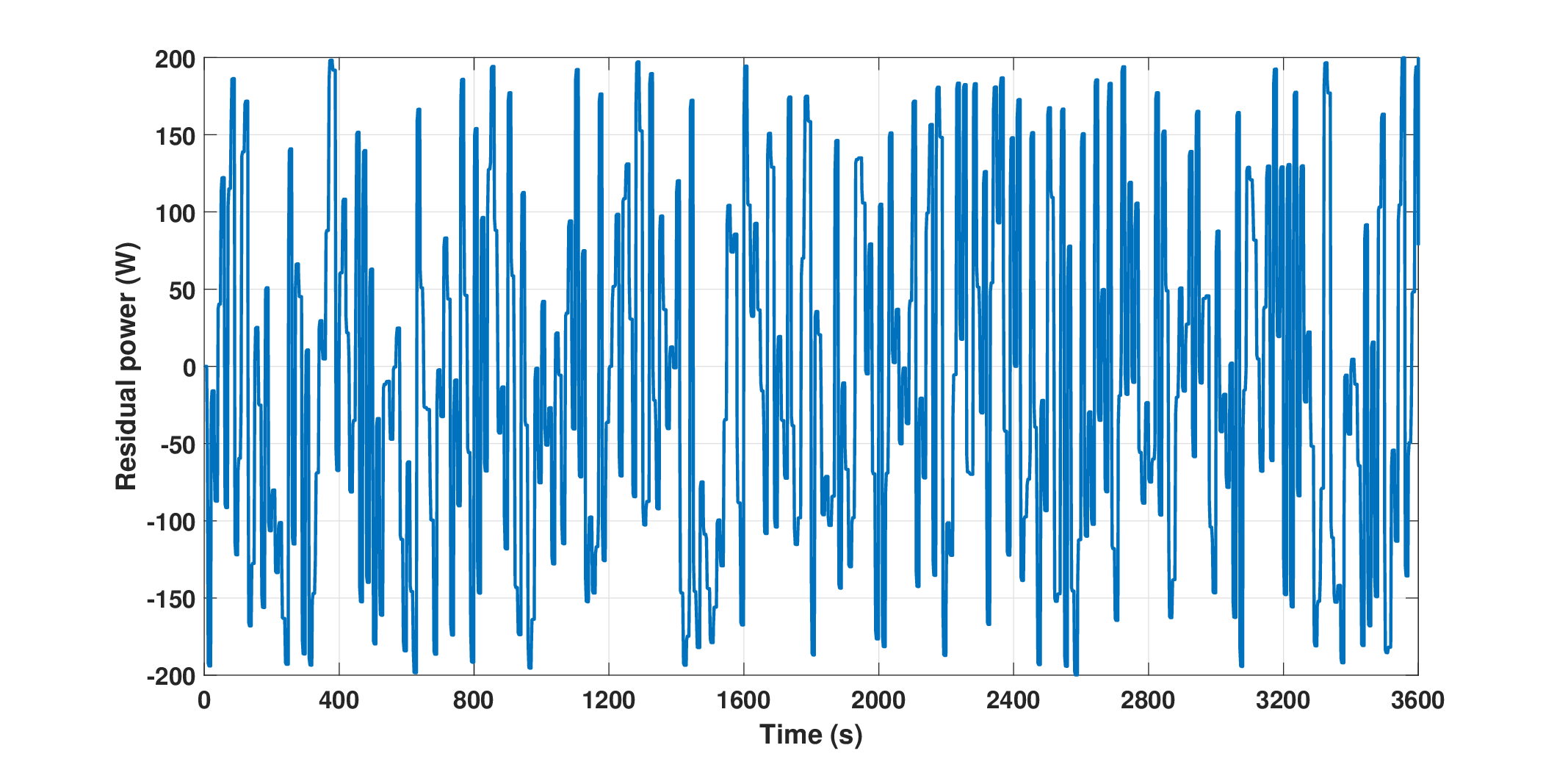}
%	\caption{The residual power generated by the PI controller $\tilde{P}_{\textrm{out}}$.}\label{FIG_SIM_5}
%\end{figure}

Further, we conduct various simulations to compare the proposed OPM approach with cell-level power optimization methods \cite{2019-TVT-CR,2023-TTE-FA} in terms of computational efficiency and scalability. The simulations involve different cell number, horizon lengths, and particle numbers, while the method \cite{2023-TTE-FA} is selected as the benchmark. We run the simulations on a workstation with a 3.5 GHz Intel Core i9-10920X CPU and 128 GB of RAM and using Matlab. The results are presented in Table \ref{Table: Computation-Comp} and visualized in Fig.~\ref{Computation-Comp}, revealing several key findings. First, our approach reduces the computation time by considerable margins exceeding 90\% in all scenarios. This reduction remains substantial even as the horizon length or the number of particles increases. Second, our proposed approach shows strong computational scalability, making it possible to manage a large number of cells at low computational costs. By contrast, the cell-level optimization method scales poorly with an almost exponential increase in computation as the number of cells grows.

\begin{table}[t!]\centering
\caption{Numerical comparison of the proposed approach and cell-level optimization}
\resizebox{0.48\textwidth}{!}{
 \begin{tabular}{>{\color{black}}c | >{\color{black}}c | >{\color{black}}l  >{\color{black}}c  >{\color{black}}c}
\toprule
\makecell[c]{Cell Number ($n$)} & \makecell[c]{Horizon \\ Length} & \makecell[c]{Method} & \makecell[c]{ Average \\ Computation \\ Time (s)} & \makecell[c]{Relative \\ Computation \\ Time Reduction\\ (\%)}  \\
\midrule
\multirow{10}{0.5cm}{\makecell[c]{50}} & \multirow{3}{0.5cm}{\makecell[c]{5}} & {\bf Cell-level optimization} & 12.56 & --- \\
& & {\bf Proposed with 50 particles} & 0.17 & 98.64  \\
& & {\bf Proposed with 100 particles} & 0.26 & 97.92  \\
\cmidrule(l){2-5}
& \multirow{3}{0.5cm}{\makecell[c]{10}} & {\bf Cell-level optimization} & 29.02 & ---  \\
& & {\bf Proposed with 50 particles} & 0.61 & 97.89  \\
& & {\bf Proposed with 100 particles} & 0.79 & 97.27  \\
\cmidrule(l){2-5}
& \multirow{3}{0.5cm}{\makecell[c]{15}} & {\bf Cell-level optimization} & 40.76 & ---  \\
& & {\bf Proposed with 50 particles} & 1.36 & 96.66  \\
& & {\bf Proposed with 100 particles} & 2.15 & 94.72  \\
\cmidrule(l){1-5}

\multirow{10}{0.5cm}{\makecell[c]{100}} & \multirow{3}{0.5cm}{\makecell[c]{5}} & {\bf Cell-level optimization} & 28.56 & --- \\
& & {\bf Proposed with 50 particles} & 0.43 & 98.49  \\
& & {\bf Proposed with 100 particles} & 0.50 & 98.24 \\
\cmidrule(l){2-5}
& \multirow{3}{0.5cm}{\makecell[c]{10}} & {\bf Cell-level optimization} & 72.77 & ---  \\
& & {\bf Proposed with 50 particles} & 1.49 & 97.95  \\
& & {\bf Proposed with 100 particles} & 1.73 & 97.62  \\
\cmidrule(l){2-5}
& \multirow{3}{0.5cm}{\makecell[c]{15}} & {\bf Cell-level optimization} & 139.93 & ---  \\
& & {\bf Proposed with 50 particles} & 3.05 & 97.82  \\
& & {\bf Proposed with 100 particles} & 3.63 & 94.40 \\
\cmidrule(l){1-5}

\multirow{10}{0.5cm}{\makecell[c]{200}} & \multirow{3}{0.5cm}{\makecell[c]{5}} & {\bf Cell-level optimization} & 101.48 & --- \\
& & {\bf Proposed with 50 particles} & 1.01 & 99.00  \\
& & {\bf Proposed with 100 particles} & 1.33 & 98.68  \\
\cmidrule(l){2-5}
& \multirow{3}{0.5cm}{\makecell[c]{10}} & {\bf Cell-level optimization} & 274.97 & ---  \\
& & {\bf Proposed with 50 particles} & 4.62 & 98.31  \\
& & {\bf Proposed with 100 particles} & 5.90 & 97.85 \\
\cmidrule(l){2-5}
& \multirow{3}{0.5cm}{\makecell[c]{15}} & {\bf Cell-level optimization} & 423.76 & ---  \\
& & {\bf Proposed with 50 particles} & 10.56 & 97.50  \\
& & {\bf Proposed with 100 particles} & 11.88 & 97.19  \\
\cmidrule(l){1-5}
\end{tabular}
 }
\label{Table: Computation-Comp}
\end{table}

\begin{figure}[!t]\centering
	\includegraphics[trim={0cm 0cm 1.5cm 0cm},clip,width=8cm]{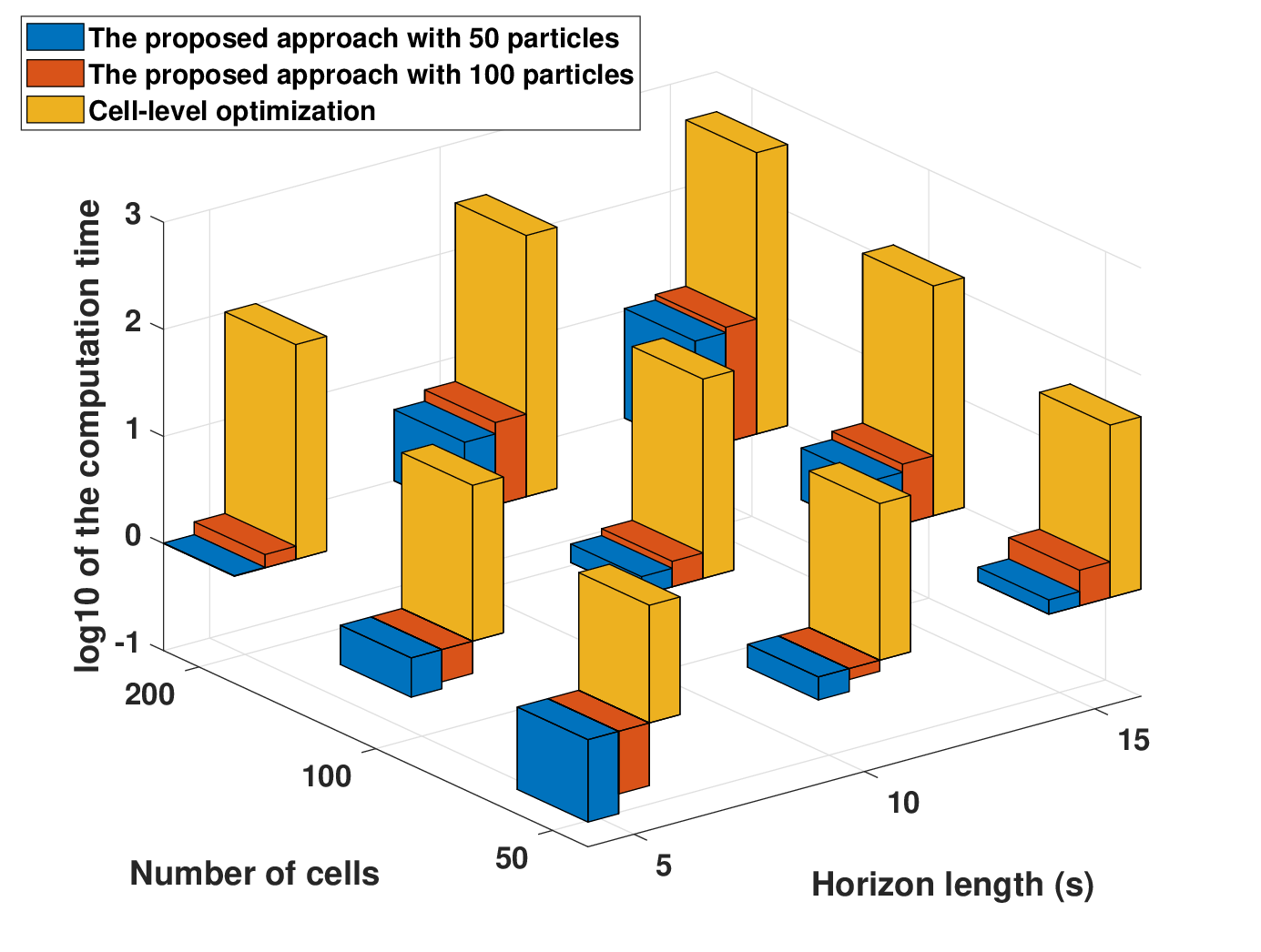}
	\caption{Comparative analysis of computation times: the proposed approach with 50 and 100 particles versus the cell-level optimization.}\label{Computation-Comp}
\end{figure}
\section{Experimental Results}
In this section, we validate the practical performance of the proposed optimal power management approach using a lab-scale prototype. Fig.~\ref{FIG_EXP_1} illustrates the experimental setup based on a 20-cell 4s5p battery pack. Table \ref{TABLE_EXP} provides the specifications for the key components of the battery pack. We employ K-type thermocouples to measure the cells' surface temperatures. Temperature and voltage measurements are collected using a National Instruments PCIe-6321 DAQ board with LabView. We use Matlab to solve the estimation problem, while the local-level controller is implemented in Matlab/Simulink. We interface the local-level controllers with the DC/DC converters through the DSP TMS320F28335. The battery pack is discharged with a nominal output power of 90 W, and the terminal voltage is set at 30 V. The experiment lasts for 30 minutes with the proposed OPM approach running every 30 seconds.

\begin{figure}[!t]\centering
	\includegraphics[trim={0cm 0cm 0cm 0cm},clip,width=\linewidth]{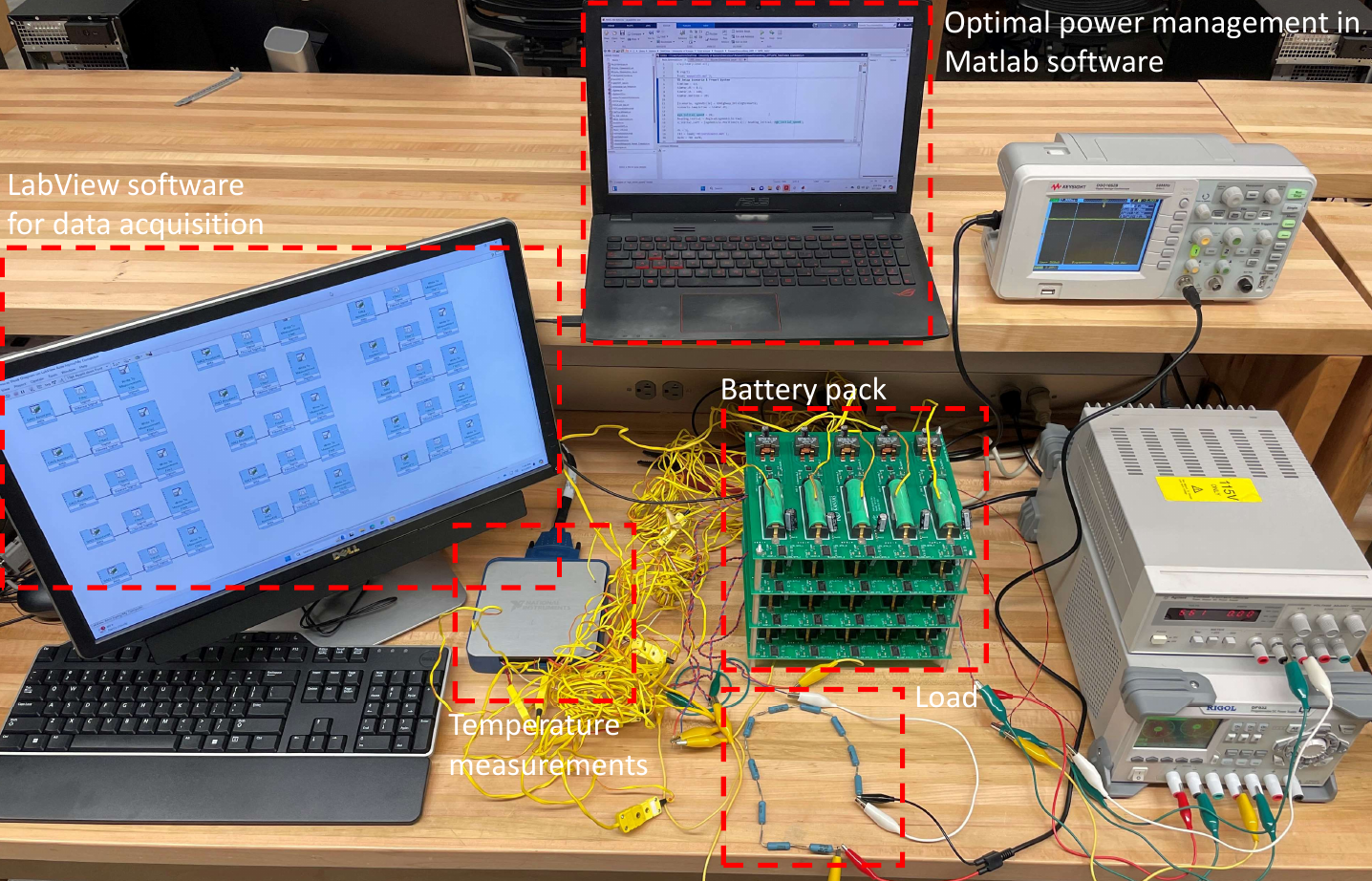}
	\caption{The experimental setup.}\label{FIG_EXP_1}
\end{figure}

%\begin{figure}[!t]
%    \centering
%    \subfloat[\centering ]{{\includegraphics[trim={0cm 1.5cm 0cm 0cm},clip,width=\linewidth]{Figures/EXP-Setup.pdf} }}
%    \;
%   \vspace{0.5em} \subfloat[\centering ]{{\includegraphics[trim={0cm 0cm 0cm 0cm},clip,width=5cm]{Figures/Circuit.pdf} }}
%    \;
%    \caption{The lab-scale prototype. (a) The experimental setup. (b) The circuit.}
%    \label{FIG_EXP_1}
%\end{figure}

The experiment starts with cells' SoC values ranging from 70\% to 80\%, while maintaining uniform temperatures at 22.3\degree C. We then increase the output power to 120 W twice to evaluate the performance of the proposed approach under output power uncertainty. The experimental results are presented in Figs.~\ref{FIG_EXP_2}-\ref{FIG_EXP_6}.

\begin{table}[!b]
	\renewcommand{\arraystretch}{1.3}
	\caption{List of Key Components in the Experimental Setup}
	\centering
	\label{TABLE_EXP}
	\centering
	{
		\begin{tabular}{l l}
			\hline\hline \\ [-3mm]
			\multicolumn{1}{c}{Device} & \multicolumn{1}{c}{Model (Value)}\\[1.6ex] \hline
			MOSFET & CSD86356Q5D \\
			Gate driver & TPS28225 \\
			Inductor & SER2915H-333KL (33 $\mu$H) \\
			Capacitor & (10 $\mu$F) \\
			Local controller \qquad \qquad \qquad & STM8S003F3P6 \\
			Main controller & TMS320f28335 \\
			Battery cell & Samsung INR18650-25R \\
			\hline\hline
		\end{tabular}
	}
\end{table}

\begin{figure*}[t]
	    \centering
    \subfloat[\centering ]{{\includegraphics[trim={2cm 0 2.75cm 1cm},clip,width=8.8cm]{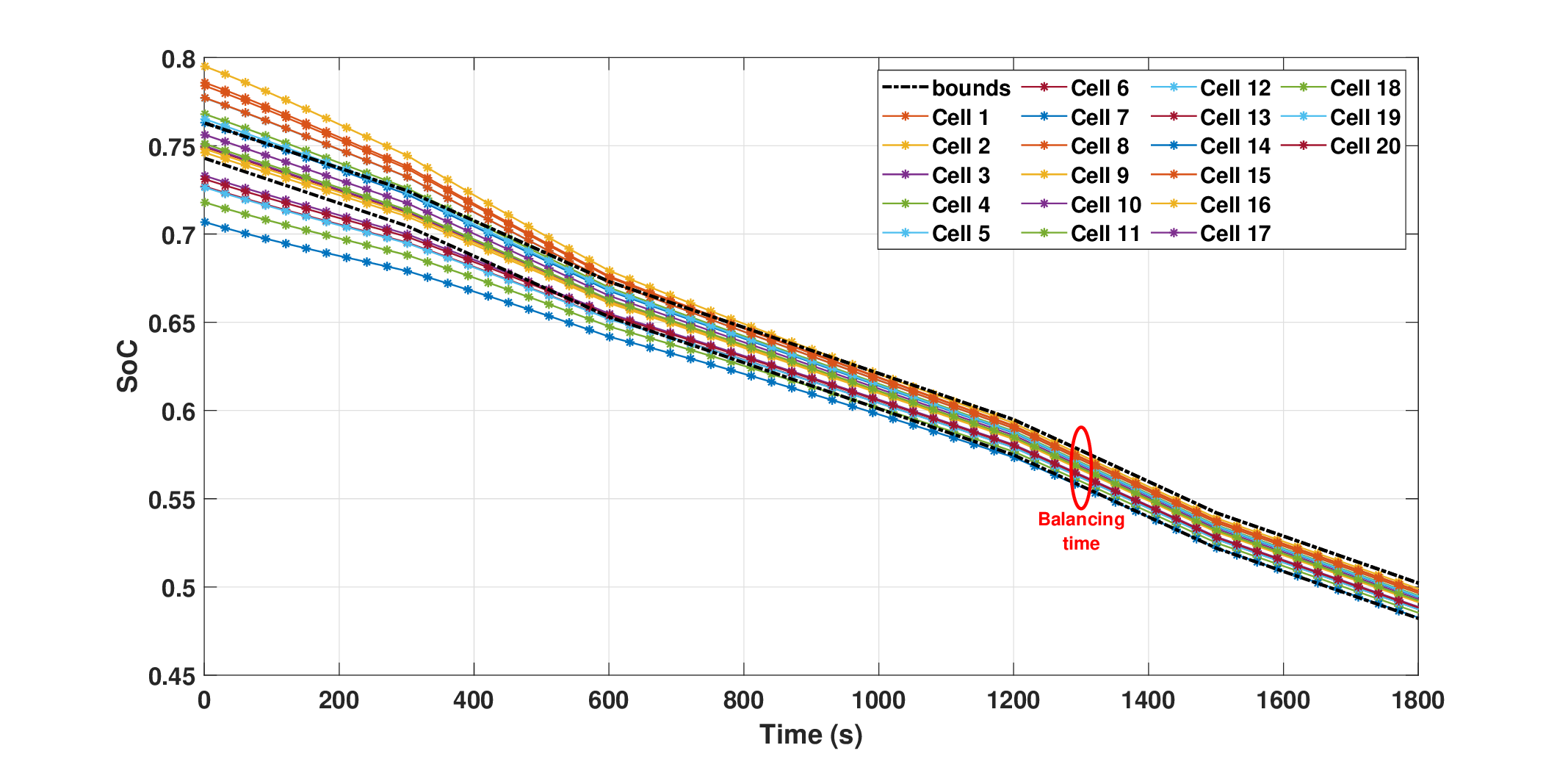} }}
    \,
    \subfloat[\centering ]{{\includegraphics[trim={2cm 0 2.75cm 1cm},clip,width=8.8cm]{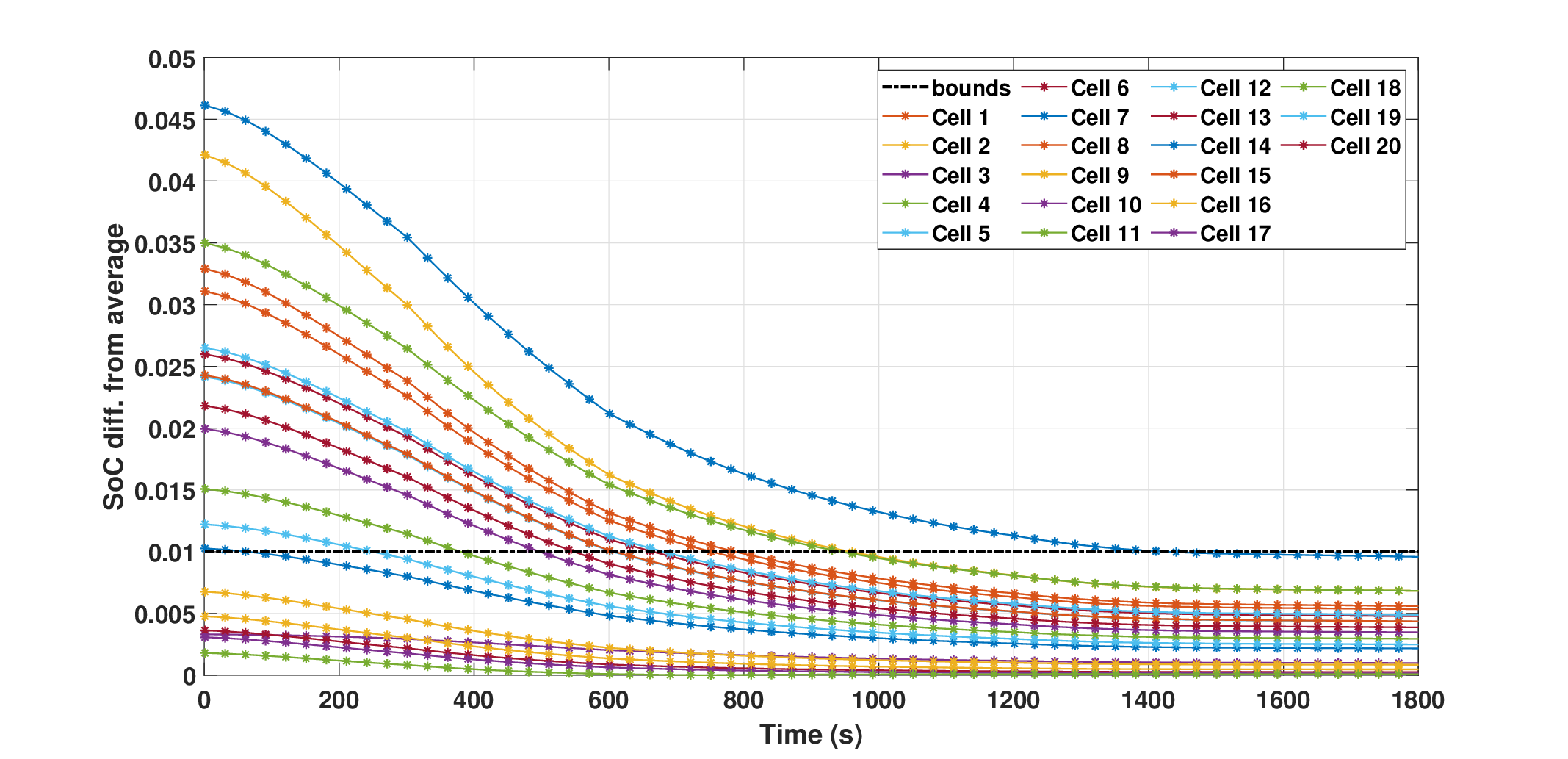} }}
    \,
    \subfloat[\centering ]{{\includegraphics[trim={2cm 0 2.75cm 1cm},clip,width=8.8cm]{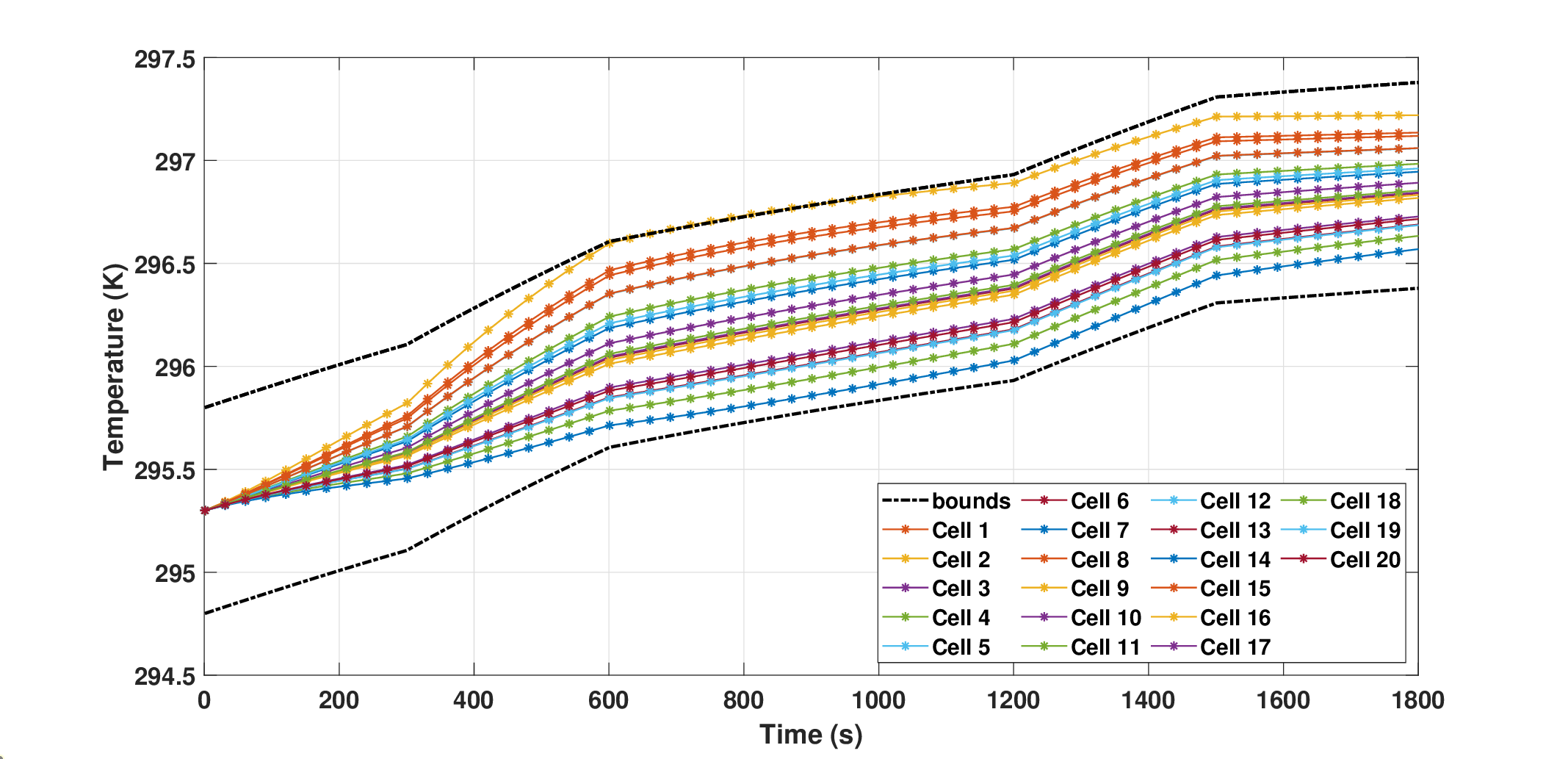} }}
    \,
    \subfloat[\centering ]{{\includegraphics[trim={2cm 0 2.75cm 1cm},clip,width=8.8cm]{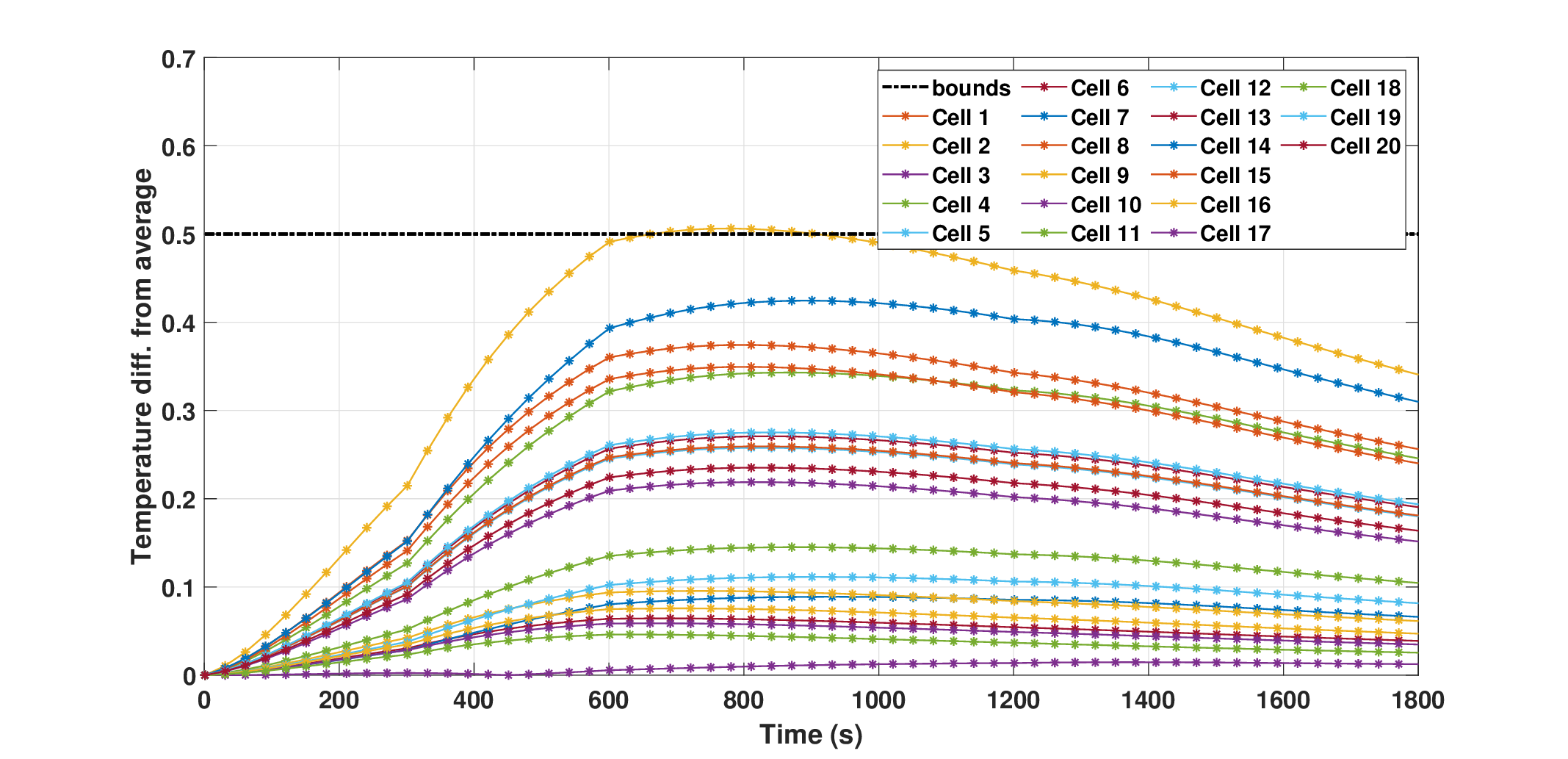} }}
    \caption{Experimental results of the SoC and temperature balancing. (a) The SoC of the cells. (b) The SoC difference of the cells from the average. (c) The temperature of the cells. (d) The temperature difference of the cells from the average.}
    \label{FIG_EXP_2}
\end{figure*}

Figs.~\ref{FIG_EXP_2} (a)-(b) show the cells' SoC values and their deviations from the average value, respectively. Initially, the cells' SoC values do not fall within the desired bounds. However, after approximately 1300 seconds, the proposed optimal power management effectively brings the cells' SoC values within the tolerable range. To delve deeper into the results, we present the control parameters of the parameterized control policy in Fig.~\ref{FIG_EXP_3}. The proposed approach initially increases $\theta_1$ to prioritize SoC balancing, leading to the growing difference in the cells' PSR values as shown in Fig.~\ref{FIG_EXP_4}. After about 1300 seconds, $\theta_1$ decreases but $\theta_2$ and $\theta_3$ increase, implying that the BESS focuses more on power loss minimization and temperature balancing (Fig.~\ref{FIG_EXP_3}). Figs.~\ref{FIG_EXP_2} (c)-(d) depict the cells' temperature values and their deviations from the average temperature. Although the cells' temperature values start to diverge in the initial stage as the proposed approach unevenly distributes the output power demand, they consistently remain within the desired balancing range throughout the experiment.

Examining the PSR in Fig.~\ref{FIG_EXP_4}, we observe an initially uneven power distribution among the cells due to the initial SoC imbalance. However, as the experiment progresses and the cells become balanced, the PSR converges to $1/n$ with $n = 20$, indicating a uniform power distribution among the cells.

\begin{figure}[!t]\centering
	\includegraphics[trim={2.4cm 0 2.75cm 1cm},clip,width=\linewidth]{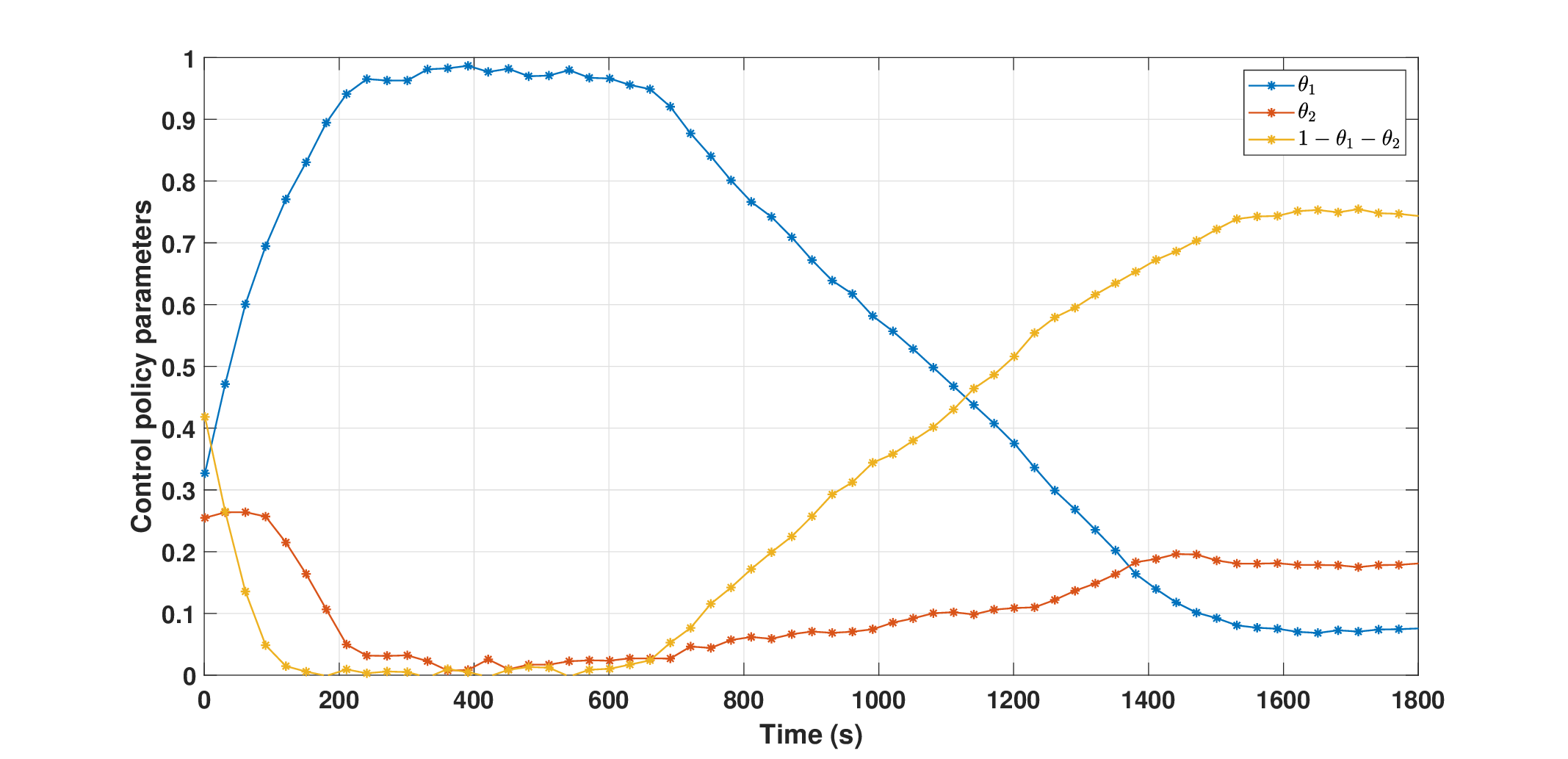}
	\caption{Control input parameters.}\label{FIG_EXP_3}
\end{figure}

\begin{figure}[!t]\centering
	\includegraphics[trim={2.4cm 0 2.75cm 1cm},clip,width=\linewidth]{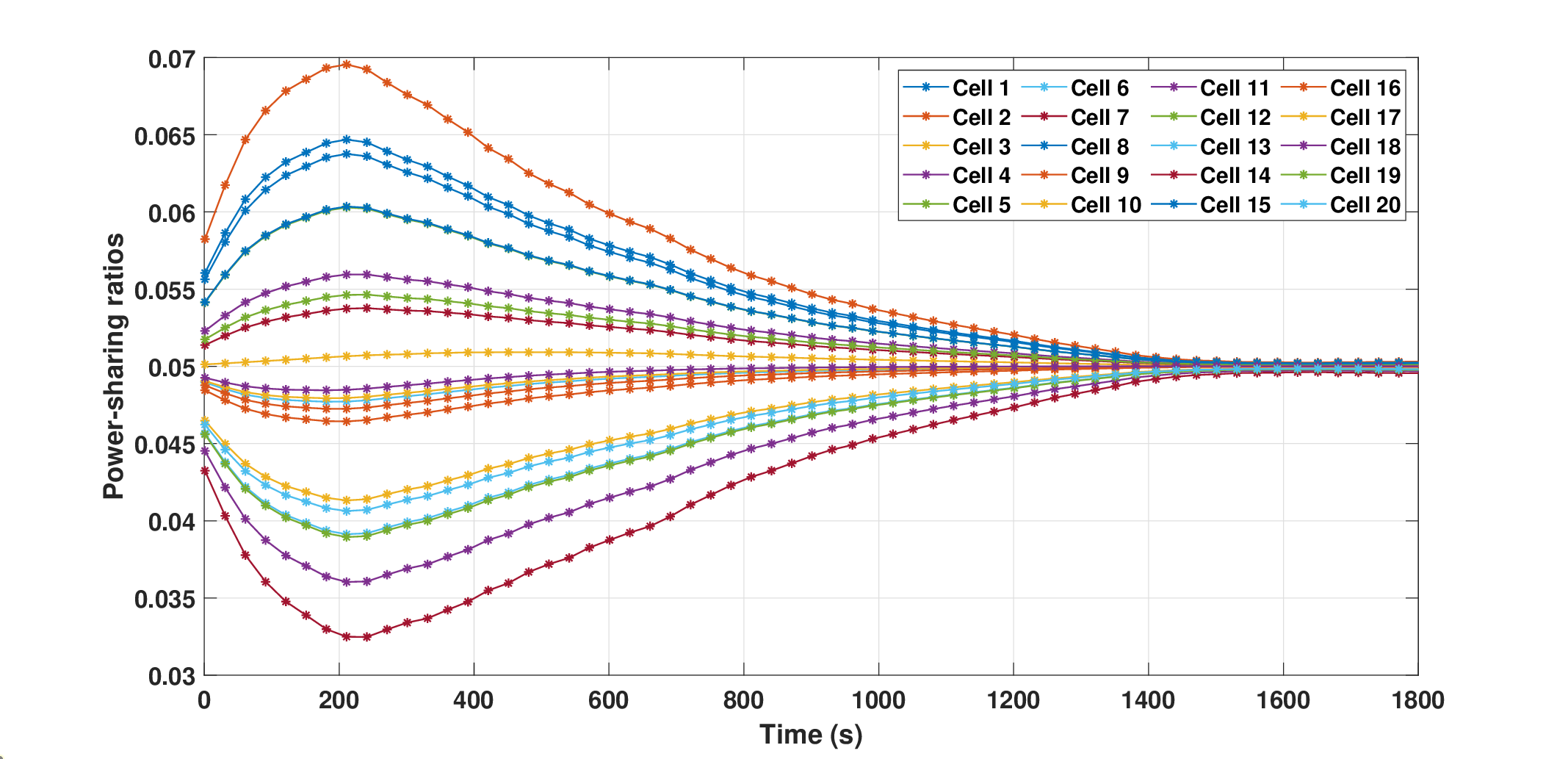}
	\caption{The cells' PSR.}\label{FIG_EXP_4}
\end{figure}

To evaluate the proposed approach's performance under output power uncertainties, we increase the actual output power from 90 W to 120 W at two time instants: the 300-th and 1200-th second. Fig.~\ref{FIG_EXP_5} shows the reference power for each cell generated by the local controller based on the PSR values. While the control policy and PSR computation are still based on the output power of 90 W, the reference power for the cells increases to adequately meet the load demand. It is important to notice that this increase varies among the cells due to the incorporation of optimal PSR. A similar trend occurs when the output power increases when the actual power changes again at the 1200-th second.

The BESS terminal voltage is also measured by a 10 kHz analog-digital converter in Fig.~\ref{FIG_EXP_6} at the 300-th second when the output power rises from 90 W to 120 W. The terminal voltage experiences a dip, but soon returns to 30 V due to the compensation by the local controller. While regulating the terminal voltage, the local controller also generates the residual power to make up for the 30 W difference between the predicted and actual power demands.

\begin{figure}[!t]\centering
	\includegraphics[trim={2.4cm 0 2.75cm 1cm},clip,width=\linewidth]{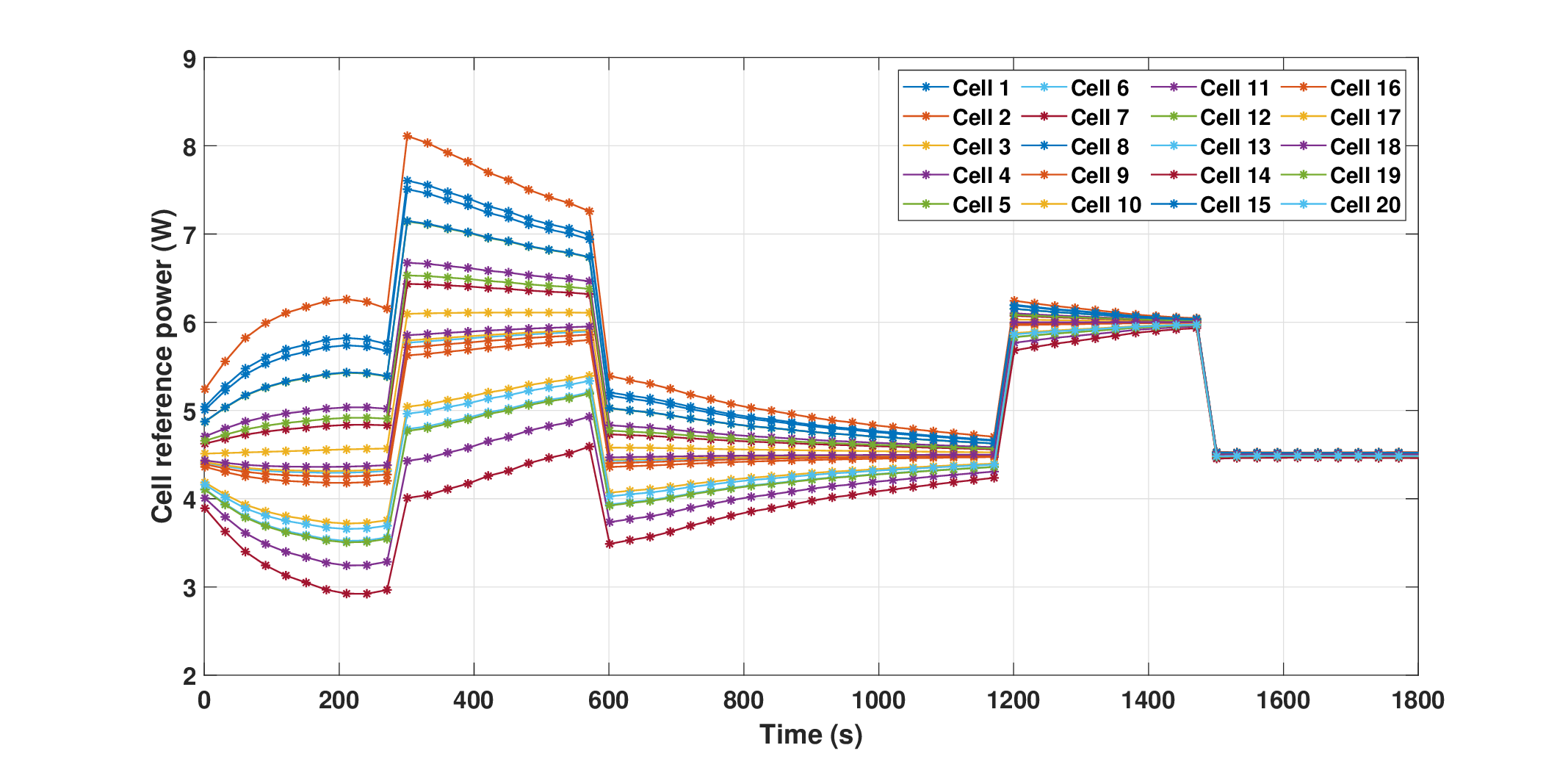}
	\caption{The cells' reference power.}\label{FIG_EXP_5}
\end{figure}

\begin{figure}[!t]\centering
	\includegraphics[trim={2.4cm 0 2.75cm 1cm},clip,width=\linewidth]{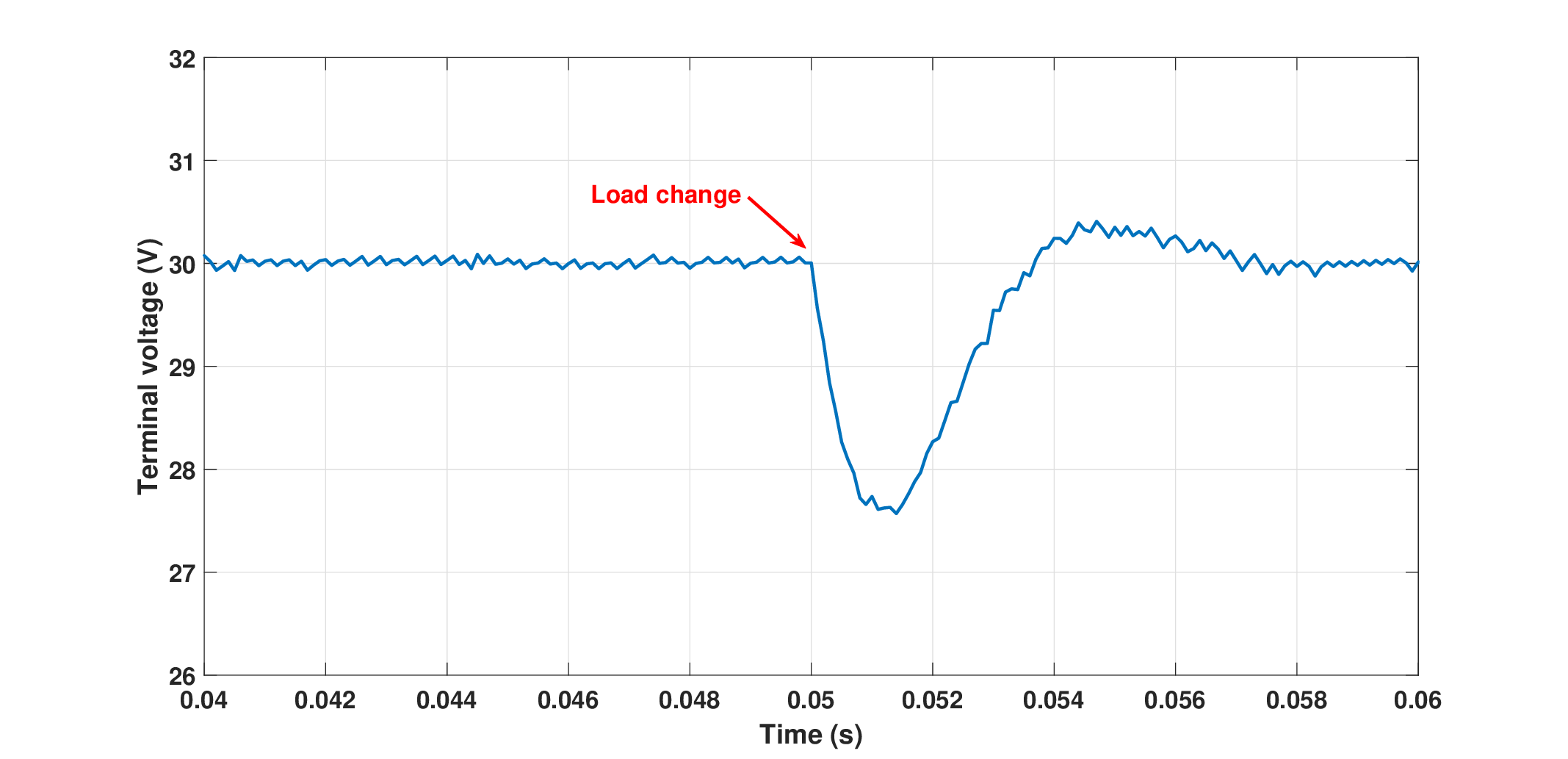}
	\caption{The BESS terminal voltage when the load increases abruptly.}\label{FIG_EXP_6}
\end{figure}
\section{Conclusion}
OPM is the cornerstone for the efficient and safe operation of large-scale BESS emerging in various sectors. This problem, however, is computationally challenging, as it requires to handle high-dimensional nonlinear nonconvex optimization. Further, OPM can experience performance loss in practical implementation because it relies on the prediction of output power demands, which are inherently uncertain. To address these challenges, we develop a new framework characterized by two key elements. First, we introduce the concept of PSR to represent each cell's contribution to the BESS operation. This ratio-based power allocation provides a flexible and practical way to meeting uncertain power demands. Second, we set up a PSR-based OPM formulation in the NMPC fashion, further proposing a parameterized control policy. We then reformulate the OPM problem as a Bayesian inference problem, which is concerned with identifying the control parameters. The problem is solved by the EnKI method. The resulting OPM approach, departing away from conventional gradient-based optimization, offers significant computational efficiency through combining control policy parameterization, Bayesian inference, and Monte Carlo sampling. We extensively evaluate our proposed approach through simulations under various operational scenarios. The results demonstrate a substantial reduction in computation time, exceeding 90\%, especially for a large number of cells and long optimization horizons. Finally, we build a 20-cell BESS prototype and conduct experimental validation. The results illustrate the effectiveness of our approach in reducing the computational load while addressing imperfect output power predictions.

\balance
\bibliographystyle{IEEEtran}
\scriptsize\bibliography{IEEEabrv,Bibliograpgy/BIB}

\end{document}